\newcommand{\cancel}[1]{}
\spnewtheorem*{xproof}{}{\itshape}{\rmfamily}
\renewenvironment{proof}[1][\proofname]
 {\xproof}
 {\endxproof}
\def\inline#1:{\par\vskip 7pt\noindent{\bf #1:}\hskip 10pt}
\def\STRH{\STRTGY_{\text{\tiny $\H$}}}
\def\STRT{\STRTGY_{\text{\tiny $\T$}}}
\def\STRU{\STRTGY_{\text{\tiny $\U$}}}
\def\CC{\mathrm{x}}
\def\PROCESS{\Psi}
\def\cE{{\cal E}}
\def\Prb{\mathbb{P}}
\def\Expct{\mathbb{E}}
\def\Rnum{\mathbb{R}}
\def\rbratio{\alpha}
\def\at{\rbratio_t}
\def\NN{{\tt N}}
\def\MM{{\tt M}}
\def\XX{{\tt X}}
\def\RED{{\tt R}}
\def\BLUE{{\tt B}}
\def\RRED{\text{\tiny \tt R}}
\def\BBLUE{\text{\tiny \tt B}}
\def\rhoRR{{\rho_{\text{\tiny \tt R}}}}
\def\rhoBB{{\rho_{\text{\tiny \tt B}}}}
\def\pRR{{P_{\text{\tiny \tt R}\text{\tiny \tt R}}}}
\def\pBB{{P_{\text{\tiny \tt B}\text{\tiny \tt B}}}}
\def\H{\text{\tt H}}
\def\T{\text{\tt T}}
\def\U{\text{\tt U}}
\def\dR{\Delta_\RED}
\def\dB{\Delta_\BLUE}
\newcommand{\STRTGY}{\pi}
\def\bpa{\mathrm{BPA}}
\def\eh{\mathrm{EH}}
\newcommand{\card}[1]{\lvert #1\rvert}
\begin{document}
\mainmatter              
\title{Assortative Mixing Equilibria in\\ Social Network Games\thanks{Supported in part by a grant of the Israel Science Foundation (1549/13)}}

\titlerunning{Assortative Mixing Equilibria}  
\iftoggle{paper}{}{\vspace{-3cm}}
\author{Chen Avin\inst{1} \and Hadassa Daltrophe\inst{1} \and Zvi Lotker\inst{1} \and David Peleg\inst{2}}
\authorrunning{Avin, Daltrophe, Lotker \& Peleg}   
%
%
\tocauthor{Hadassa Daltrophe}%
\institute{Ben Gurion University of the Negev, Beer Sheva, Israel\\
\email{avin@cse.bgu.ac.il,hd@cs.bgu.ac.il,zvilo@bgu.ac.il}\\
       \and
Weizmann Institute of Science, Rehovot, Israel\\
\email{david.peleg@weizmann.ac.il}
 }

\maketitle              
\iftoggle{paper}{}{\vspace{-10pt}}
\begin{abstract}        
It is known that individuals in social networks tend to exhibit 
{\em homophily} (a.k.a. {\em assortative mixing}) in their social ties, 
which implies that they prefer bonding with others of their own kind. 
But what are the reasons for this phenomenon? Is it that such relations are 
more convenient and easier to maintain? Or are there also 
some more tangible benefits to be gained from this collective behaviour?

The current work takes a game-theoretic perspective on this phenomenon, 
and studies the conditions under which different assortative mixing strategies 
lead to equilibrium in an evolving social network. 
We focus on a biased preferential attachment model where the strategy of 
each group (e.g., political or social minority) determines the level of bias 
of its members toward other group members and non-members.
Our first result is that if the utility function that the group attempts 
to maximize is the {\em degree centrality} of the group, interpreted as 
the sum of degrees of the group members in the network, then the only strategy 
achieving Nash equilibrium is a perfect homophily, which implies that 
cooperation with other groups is harmful to this utility function.
A second, and perhaps more surprising, result is that 
if a reward for inter-group cooperation is added to the utility function 
(e.g., externally enforced by an authority as a regulation), 
then there are only two possible equilibria, namely, {\em perfect homophily} 
or {\em perfect heterophily}, and it is possible to characterize 
their feasibility spaces. Interestingly, these results hold 
regardless of the minority-majority ratio in the population.

We believe that these results, as well as the game-theoretic perspective 
presented herein, may contribute to a better understanding of the forces 
that shape the groups and communities of our society.
\iftoggle{paper}{}{\vspace{-5pt}}
\keywords{social networks; homophily; game theory}
\iftoggle{paper}{}{\vspace{-5pt}}
\end{abstract}

\iftoggle{paper}{}{\vspace{-30pt}}
\section{Introduction}
\iftoggle{paper}{}{\vspace{-8pt}}
{\em Homophily} (lit. ``love of the same")~\cite{lazarsfeld1954friendship}, also known as {\em assortative mixing}~\cite{newman2003mixing}, is a prevalent and well documented phenomenon in social networks~\cite{mcpherson2001birds}; in making their social ties, people often prefer to connect with other individuals of similar characteristics, such as nationality, race, gender, age, religion, education or profession. 

Homophily
has many important consequences, both on the structure of the social network (e.g., the formation of communities) and on the behaviors and opportunities of participants in it, for example on the welfare of individuals \cite{jackson2008social} and on the diffusion patterns of information in the network \cite{jackson2013diffusion}.
It is therefore interesting to explore the reasons for this phenomenon.
Clearly, one natural reason is that relationship with similar individuals may be more convenient and easier to maintain. 
But are there also some more tangible benefits to be gained from this collective behaviour of sub-populations in the network?

To better understand homophily, we take a different perspective on this phenomenon and study it through a strategic, game-theoretic prism.
We investigate the conditions under which different assortative (and disassortative) mixing strategies lead to equilibrium in an \emph{evolving social network} game. 


To model the network evolution, we use a variant of the classical \emph{preferential attachment} model \cite{barabasi1999emergence}, which incorporates a 
heterogeneous population and assortative mixing patterns for the sub-populations. This model, known as \emph{biased preferential attachment} (BPA) ~\cite{avin2015homophily}, maintains the ``rich get richer" property, but additionally enables different mixing patterns (including perfect homophily and heterophily) between sub-populations, by using rejection sampling.


In this paper, we modify this model by turning it into a game. Each sub-population is represented as a player who can choose its mixing pattern as a strategy. The {\em utility function} (or {\em payoff}) of a player is a result of its population's (expected) properties in the BPA model. A {\em strategy profile} (describing the strategies of both players) attains a \emph{Nash equilibrium} for the game if no player  can do better by unilaterally changing its own strategy.

Obviously, the result of the game depends on the players' utility functions. In the current study we take an initial step and study two natural utility functions. In the first, we consider the payoff to be the total power of the group, that is, the sum of degrees of all group members.
In this case we prove that there is a unique stable Nash equilibrium which is the  \emph{perfect homophily} profile, namely, cooperation with other groups is harmful to this utility function. We stress that while there are other strategy profiles, like the unbaised profile, that guarantee the same 
total power to the groups, those profiles do not yield Nash equilibrium.

Since perfect homophily results in complete segregation of the sub-populations, we consider a second utility function based on a linear combination between the total power of the group and the number of \emph{cross-population} links (i.e., the size of the population cut). In particular, the utility is taken to be $\gamma$ times the total power of the group plus $1-\gamma$ times the population cut size, for some {\em weight factor} $0\le\gamma\le 1$. Such a utility can be viewed as a rule (or a law) imposed by a regulator to encourage cooperation between the two sub-populations. 
At a first glance, this utility seems to lead to different Nash equilibria for different $\gamma$ values. Somewhat surprisingly, we show that only two possible equilibria may emerge. For $\gamma > 1/2$, the \emph{perfect homophily} profile is the unique Nash equilibrium, and for $\gamma < 1/2$, the \emph{heterophily} profile is the unique Nash equilibrium. For $\gamma = 1/2$, both profiles yield a Nash equilibrium, but only the perfect homophily yields a stable equilibrium.
(Note, by the way, that all our results are independent of the ratio $r$ between the sizes of the two sub-populations.)

What may we learn from these results? A first, quite intuitive, lesson is that if the payoff includes benefits for heterophilic edges, then the game can move away from the perfect homophily equilibrium. But, within the natural utility function we study, if the game moves away from the homophily equilibrium, then it must reach a perfect heterophily equilibrium.  Both of these equilibria may appear to be too ``radical'' from a social capital perspective, which may find it desirable to maintain some balance in-between the two extremes, i.e., preserve the internal structure of both sub-populations as well as form significant 
cross-population links between the two sub-populations. This leaves us with some interesting follow-up research directions:
what `mechanism design' rules can a regulator employ in order to have a more fine-grained control on the equilibrium? what happens in a system with more than two sub-populations? how do the equilibria behave? We leave these questions for future work; we believe that taking the game theoretic perspective on evolving social network models for heterogenous populations is an important tool in understanding homophily, as shown in this initial model.

\iftoggle{paper}{}
{
Due to space limitations, we provide only an outline of our proofs. The interested reader is referred to~\url{https://goo.gl/h0Fegc} for details.
}




\begin{figure*}[t!]
\iftoggle{paper}{}{\vspace{-0.3cm}}
\begin{center}
	\begin{tabular}{ccc}
	   \includegraphics[width=.32\textwidth]{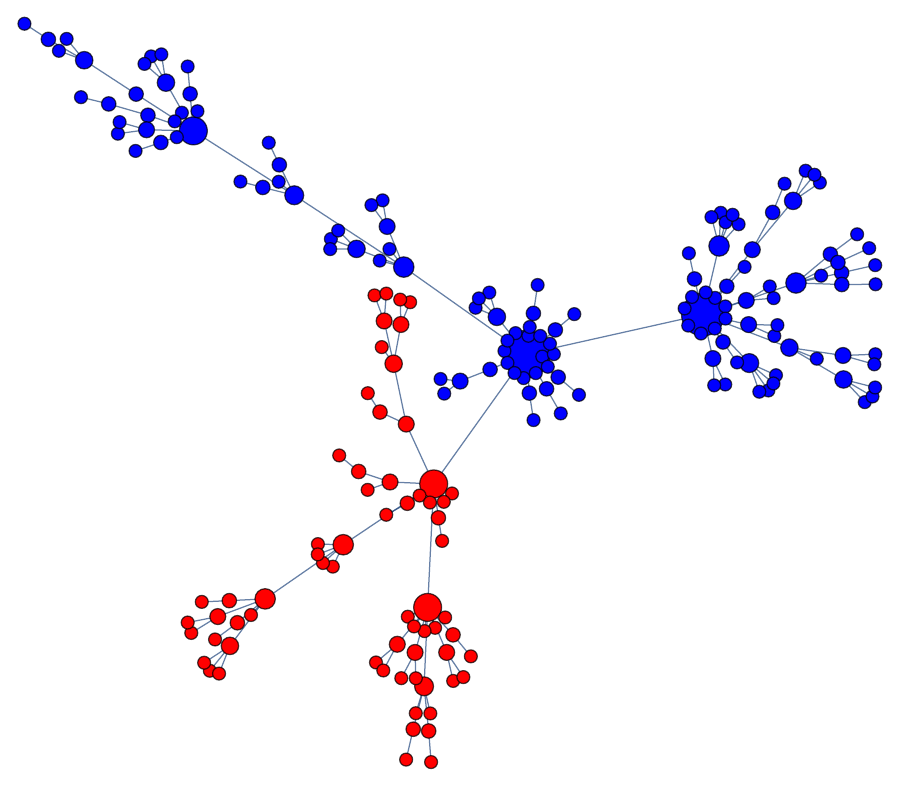}  &
	   \includegraphics[width=.35\textwidth]{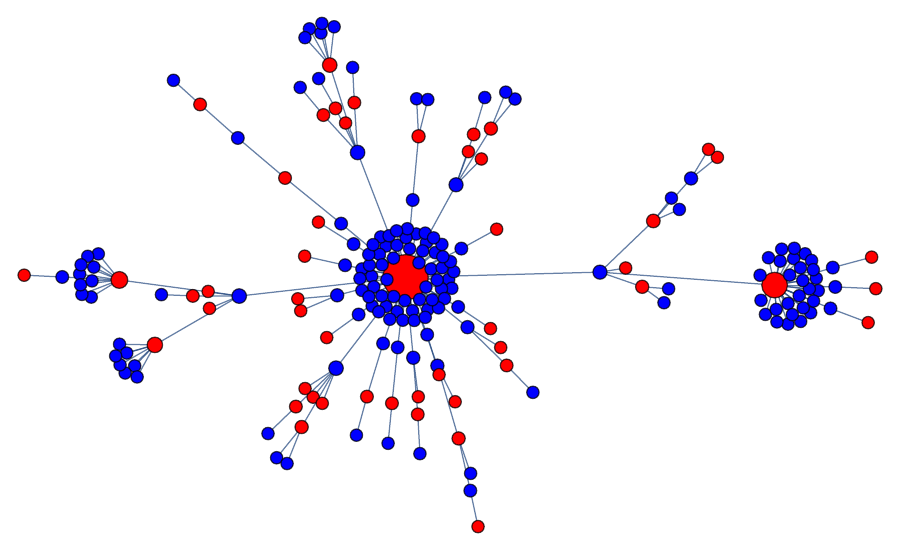} &
	    \includegraphics[width=.32\textwidth]{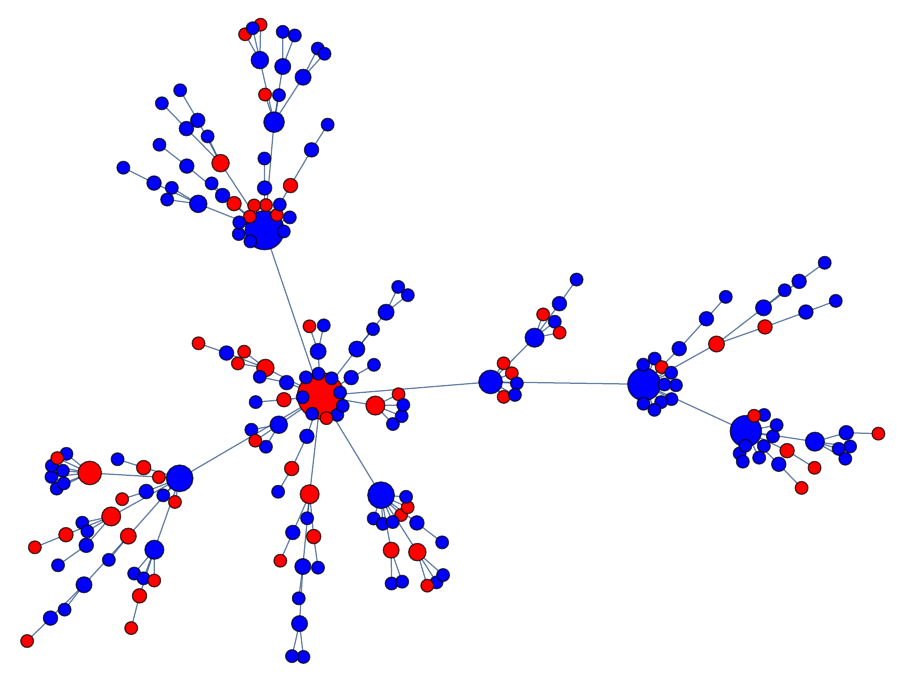} \\
		(a) $\STRH$ - homophily & (b) $\STRT$ - heterophily  &  (c) $\STRU$ - unbiased	
		\end{tabular}
	\caption{Examples of the Biased Preferential Attachment (BPA) model with various parameter settings. All examples depict a 200-vertex bi-populated network generated by our BPA model starting from a single edge connecting a blue and a red vertex and $30\%$ red nodes (with vertex size proportional to its degree).}         
	\label{fig:model}
\end{center}
\iftoggle{paper}{}{\vspace{-12pt}}
\end{figure*}

\iftoggle{paper}{}{\vspace{-11pt}}
\section{Related work}
\iftoggle{paper}{}{\vspace{-4pt}}

Game theory provides a natural framework for modeling selfish interests and the networks they generate~\cite{aumann1988endogenous,tardos2007network}. While many studies (see~\cite{jackson2005survey} for a comprehensive survey) focus on \textit{local} network formation games, others (e.g.,~\cite{chen2009network}) model the players as making \textit{global} structural decisions. In this paper we define a game that features a mixture of both local and global characteristics. This situation is close to cooperative games~\cite{bilbao2012cooperative}, where all the nodes of the same group have the same payment. However, the key idea of cooperative games is to choose which coalitions to form, whereas here the partition into groups is predefined. 

In this context, one should distinguish between \textit{network formation games} \cite{jackson2002evolution,jackson2005survey,tardos2007network} and \textit{evolving network games} (e.g.,~\cite{Bramoulle12}). The former involve a fixed set of nodes, with the connections between them changing over time. In contrast, in the evolving network model used herein, the nodes and edges are both dynamic, and new nodes join the network as it evolves over time.

Based on the assumption that people have tendency to copy the decisions of other people, we suggest a network construction process that follows the well known preferential attachment model~\cite{barabasi1999emergence} with an additional phase to incorporate the mixing parameter~\cite{avin2015homophily}. However, related studies in the economics literature examine different procedures to model the social network formation. The studies of~\cite{currarini2009economic,fu2012evolution} assume that individuals are randomly paired with other members of the population and then match assortatively. Another model, presented at~\cite{Bramoulle12}, suggests two-phase attachments. The nodes first choose their neighbors with a bias towards their own type and then make an unbiased choice of neighbors from among the neighbors of their biased neighbors. While the models of~\cite{jackson2002evolution,fu2012evolution} and others assume that a connecting edge between a pair of nodes is fixed by using bilateral agreement, in our model the matching choice is somewhat ambiguous. The rejection of a proposed connection can be interpreted as either decided by one of the parties unilaterally or accepted by a bilateral agreement. 

One of the main themes of this paper is studying the homophily phenomenon and its influence on minority-majority groups. McPherson et al.~\cite{mcpherson2001birds} give an overview of research on homophily and survey a variety of properties and how they lead to particular patterns in bonding. While some studies (e.g.,~\cite{currarini2009economic,di2015quantifying,avin2015homophily}) model homophily as ranging over a spectrum between perfect homophily and unbiased society, we have followed~\cite{Bramoulle12} and~\cite{fu2012evolution}, which also allow disassortative matching. 


Currarini, Jackson and Pin~\cite{currarini2009economic} examine friendship patterns in a representative sample of U.S. high schools and build a model of friendship formation based on empirical data. They report that all groups are biased towards same-type friendship relative to demographics, but different homophilic patterns emerge as a function of the group size; while homophily is essentially absent for groups that comprise very small or very large fractions of their school, it is significant for groups that comprise a middle-ranged fraction.
In~\cite{fu2012evolution} it is also claimed that the majority group has greater tendency to homophily. In contrast, we have presented independence between the size of the group and the mixing pattern. Namely, the majority-minority parameter $r$ does not influence the attained equilibria. This inconsistency can be explained by the different construction of the network (\cite{currarini2009economic} and~\cite{fu2012evolution} assume random matching with biased agreement as mentioned above), or perhaps by
the simplicity of our model and the fact that it involves only two groups.
\iftoggle{paper}{}{\vspace{-11pt}}
\section{Network and Game Model}
\iftoggle{paper}{}{\vspace{-8pt}}
Our network model is an extension of the bi-populated biased preferential attachment (BPA) model \cite{avin2015homophily}. We use this model as the basis to an
{\em evolving heterogeneous network game}. 
We start by describing the network model.


\iftoggle{paper}{}{\vspace{-4pt}}
\subsection{Biased Preferential Attachment Model}
The \emph{biased preferential attachment model\footnote{In fact, here we extend the model of \cite{avin2015homophily} to allow heterophily.}} (BPA) ~\cite{avin2015homophily} is a bi-populated preferential attachment model obtained by applying the classical preferential attachment model 
\cite{barabasi1999emergence} to a bi-populated minority-majority network augmented with homophily. 

\begin{definition}[BPA Model, $\bpa(n, r, \STRTGY)$
]
The model describes a bi-populated random evolving network with red and blue vertices, where $n$ is the total number of nodes, 
$r$ is the arrival rate of the red vertices and $\STRTGY$ is the \emph{mixing matrix}.
Denote the social network at time $t$ by $G_t=(V_t,E_t)$, where $V_t$ and $E_t$, respectively, are the sets of vertices and edges in the network at time $t$, and let $d_t(v)$ denote the degree of vertex $v$ at time $t$. 
The process starts with an arbitrary initial bi-populated (red-blue) connected network $G_0$ with $n_0$ vertices and $m_0$ edges.
For simplicity we hereafter assume that $G_0$ consists of one blue and one red vertex connected
by an edge, but this assumption can be removed.
This initial network evolves in $n$ time steps as follows. 
In every time step $t$, a new vertex $v$ enters the network. 
The arrival rate of the red nodes is denoted by $0 < r <1$, i.e., the new 
vertex $v$ is red with probability $r$ and blue with probability $1-r$. 

In the first stage, $v$ selects a {\em tentative} neighbor $u$ at random by preferential attachment, i.e., with probability proportional to $u$'s degree at time $t$, 
\iftoggle{paper}{}{\vspace{-0.2cm}}
$$\Prb[u \text{ is chosen}] = {d_t(u)}/{\sum_{w\in V_t}d_t(w)}.$$
The second stage employs a $2 \times 2$ stochastic \emph{mixing matrix}, $\STRTGY$, composed of the stochastic \emph{homophily vectors} of each player, $\STRTGY_{\RRED}, \STRTGY_{\BBLUE}$, i.e.,
\iftoggle{paper}{}{\vspace{-2pt}}
$$
\STRTGY = \left(
  \begin{array}{c}
    \STRTGY_{\RRED} \\
    \STRTGY_{\BBLUE} \\
  \end{array}
\right)
= \left(
  \begin{array}{cc}
    \rhoRR & 1-\rhoRR \\
    1-\rhoBB & \rhoBB \\
  \end{array}
\right).
$$
\iftoggle{paper}{}{\vspace{-2pt}}
Letting $\CC \in \{\RED, \BLUE\}$ be $v$'s color, the edge $(v,u)$ is inserted into the graph with probability $\rho_\CC$ when $u$'s color is also $\CC$.
If the colors differ, then the edge is inserted with probability $1-\rho_\CC$.
If the edge is rejected (i.e., is not inserted into the graph), then the two-stage procedure is restarted. 
This process is repeated until some edge $\{v,u\}$ has been inserted. Thus in each time step, one new vertex and one new edge are added to the existing graph. 
\end{definition}

Note that the mixing matrix $\STRTGY$ describes the degree of {\em segregation}
(incorporated by using rejection sampling) of the system. In particular, using the {\em perfect homophily} matrix
\def\HOMATRIX{\STRH =\left(
  \begin{array}{c}
    \H_{\RRED} \\
    \H_{\BBLUE} \\
  \end{array}
\right)=
\left(
  \begin{array}{cc}
    1 & 0 \\
    0 & 1 \\
  \end{array}
\right),}
\iftoggle{paper}{$$\HOMATRIX$$}{$\HOMATRIX$}
all added edges connect vertex pairs of the same color. 
%
At the other extreme, using the perfect {\em heterophily} matrix 
\def\HEMATRIX{\STRT 
=
\left(
  \begin{array}{c}
    \T_{\RRED} \\
    \T_{\BBLUE} \\
  \end{array}
\right)
= 
\left(
  \begin{array}{cc}
    0 & 1 \\
    1 & 0 \\
  \end{array}
\right),}
\iftoggle{paper}{$$\HEMATRIX$$}{$\HEMATRIX$}
all added edges connect vertex pairs with different color.
Similarly, using the {\em unbiased} strategy matrix
\def\UMATRIX{\STRU 
=
\left(
  \begin{array}{c}
    \U_{\RRED} \\
    \U_{\BBLUE} \\
  \end{array}
\right)
= 
\left(
  \begin{array}{cc}
    .5 & .5 \\
    .5 & .5 \\
  \end{array}
\right),}
\iftoggle{paper}{$$\UMATRIX$$}{$\UMATRIX$}
edges are connected independently of the node colors. For intermediate
values $0<\rhoRR,\rhoBB<1$, the players show a tendency to favor one kind of interaction over another. When $\rhoRR,\rhoBB>0.5$, the players tend to be
homophilic, and when $\rhoRR,\rhoBB<0.5$, the players tend to be heterophilic. Figure \ref{fig:model} presents three examples of parameter settings for the BPA model on a 200-vertex bi-populated social network with $r=0.3$ ($30\%$ red nodes), using $\STRH, \STRT$ and $\STRU$. 

\iftoggle{paper}{}{\vspace{-11pt}}
\subsection{Evolving Heterogeneous Network Games}
We now define the {\em evolving heterogeneous} $\eh\left(t,r,\STRTGY,\gamma\right)$ 
network game ($\eh$ {\em game}, for short) between the two sub-populations.
The game is played between two players, the red player $\RED$ and the blue player $\BLUE$. (Note that we occasionally use $\RED$ and $\BLUE$ to denote either the \textit{color}, the corresponding \textit{set} of nodes,
or the corresponding 
\textit{player}. The exact meaning will be clear from the context.)

Assume $r$ and $G_0$ are given to the players. Each player $\XX\in\{\RED,\BLUE\}$ can now choose its \emph{strategy vector} as a mixing vector $\STRTGY_{\XX}$ in the mixing matrix $\STRTGY$. Then the network evolves according the biased preferential attachment model $\bpa(t,r,\STRTGY)$.

Let $n_t(\RED)$ and $n_t(\BLUE)$, respectively, denote the number of red and blue nodes at time $t>0$, where $n_t = n_t(\RED)+n_t(\BLUE)=n_0+t$. Denote by $d_t(\RED)$ (respectively, $d_t(\BLUE)$) the sum of degrees of the red (resp., blue) vertices present in the system at time $t\ge 0$. Altogether, the number of edges in the network at time $t$ is $m_t=m_0+t$, where $d_t(\RED)+d_t(\BLUE)=2m_t$. 

Let $C(G_t)$ denote the \textit{cut} of the graph $G_t$ defined by the red-blue partition of $V_t$, i.e., the set of edges that have one endpoint in $\RED$ and the other in $\BLUE$. Formally,
\iftoggle{paper}
{$$C(G_t) = \left\{\left(u,v\right)\in E_t \mid u\in\RED,v\in\BLUE\right\}.$$}
{\\ \centerline{$C(G_t) = \left\{\left(u,v\right)\in E_t \mid u\in\RED,v\in\BLUE\right\}.$}}
Let $\phi(G_t)=\card{C(G_t)}$ denote the size of the cut.

In our game, the \emph{payoff} of each player is a combination of two quantities: the {\em total power} of its sub-population (namely, its expected sum of degrees), and the expected {\em cut size} $\phi(G)$.
Observe that these quantities pull in opposite directions, hence they are balanced using a parameter $0 \leq \gamma \leq 1$ that will serve as a {\em weighting factor} for the {\em utility function} of the game. The parameter $\gamma$ can be viewed as set by a regulator to enforce cooperation between sub-populations.
Formally, the payoffs (utilities) of the
players $\RED$ and $\BLUE$ at time $t$ are
\iftoggle{paper}{}{\vspace{-0.3cm}}
\begin{eqnarray*}
U_t^\gamma(\RED) ~=~ \gamma\frac{d_t(\RED)}{d_t}+(1-\gamma)\frac{\phi_t}{2 m_t}= \frac{1}{d_t}\Big(\gamma d_t(\RED)+(1-\gamma)\phi_t\Big), \\
U_t^\gamma(\BLUE) ~=~ \gamma\frac{d_t(\BLUE)}{d_t}+(1-\gamma)\frac{\phi_t}{2 m_t}=\frac{1}{d_t}\Big(\gamma d_t(\BLUE)+(1-\gamma)\phi_t\Big).
\end{eqnarray*}

A strategy profile $\STRTGY$ is a \textit{Nash equilibrium} for the game $\eh\left(t,r,\STRTGY,\gamma\right)$ if no player $\XX\in\{\RED,\BLUE\}$ can do better by unilaterally changing its own strategy $\STRTGY_{\XX}$.
A Nash equilibrium for the game $\eh\left(t,r,\STRTGY,\gamma\right)$ is \textit{stable} if a small change in $\STRTGY$ for one player leads to a situation where two conditions hold:
$(i)$ the player who did not change has no better strategy in the new circumstance, and
$(ii)$ the player who did change is now playing with a strictly worse strategy.
If both conditions are met, then the player who changed its $\STRTGY$ will return immediately to the Nash equilibrium, hence the equilibrium is {\em stable}. If condition $(i)$ does not hold (but condition $(ii)$ does), then the equilibrium is \textit{unstable}. 

\iftoggle{paper}{}{\vspace{-11pt}}
\section{Degree Maximization Game}
\label{sec:basic}
\iftoggle{paper}{}{\vspace{-8pt}}
Before studying the behavior of the general evolving heterogeneous network game, let us consider the solution of the game in the basic case where $\gamma=1$ for every $t$, i.e., each player's utility depends only on the expected sum of degrees.

\subsubsection*{An urn process.}
The biased preferential attachment $\bpa(n, r,\STRTGY)$ process can also be interpreted as a Polya's urn process, where each new edge added to the graph corresponds to two new balls added to the urn, one for each endpoint, and the balls are colored by the color of the corresponding vertices. In this interpretation, a time step of the original evolving network process corresponds to the arrival of a new ball $x$ (which is red with probability $r$ and blue with probability $1-r$), and in the ensuing procedure, we choose an existing ball $y$ from the urn uniformly at random; now, if $x$ is of the same (respectively, different) color $\CC\in{\RED,\BLUE}$ as $y$, then with probability $\rho_\CC$ (resp., $1-\rho_\CC$) we add to the urn both $x$ and a second copy of $y$ (corresponding to the two endpoints of the added edge), 
and with probability $1-\rho_\CC$ (resp., $\rho_\CC$) we reject the choice of $y$ and repeat the experiment, i.e., choose another existing ball $y'$ from the urn uniformly at random. This is repeated until the choice of $y$ is not rejected. 
Hence the arrival of each new ball $x$ results in the addition of exactly two new balls to the urn, namely, $x$ and a copy of some existing ball $y$.

The key observation is that to analyze the expected fraction of the red balls in the urn at time $t$, there is no need to keep track of the degrees of individual vertices in the corresponding process of evolving network; the sum of degrees of all red vertices, $d_t(\RED)$, is exactly the number of red  balls in the urn. 
Noting that exactly two balls join the system in each time step, we have 
\iftoggle{paper}
{$$d_t(\RED)+d_t(\BLUE)~=~ d_t ~=~ 2t+n_0=2(t+1).$$}
{\\ \centerline{$d_t(\RED)+d_t(\BLUE)~=~ d_t ~=~ 2t+n_0=2(t+1)$.}}
Note that while $d_t(\RED)$ and $d_t(\BLUE)$ are random variables, $d_t$ is not.

\subsubsection*{Convergence of expectations.}
Let $\alpha_t = {d_t(\RED)}/{d_t}$ be a random variable denoting the fraction of red balls in the system at time $t$. 
Given the mixing matrix $\STRTGY$, we claim that 
the process will converge to a ratio of $\alpha$ red balls in the system (as a function of $\STRTGY$).
More formally, we claim that, regardless of the starting condition, there exists a limit
$\alpha ~=~ \lim_{t\rightarrow \infty} \Expct[\alpha_t]~.$


\begin{lemma}
\label{lem:Expectation}
{~~} 
$\displaystyle \Expct\left[\alpha_{t+1} \mid \alpha_{t}\right]=\alpha_{t}+\frac{F\left(\alpha_{t}\right)-\alpha_{t}}{t+2}$~, where \\ 
$$F(x)=\frac{1}{2}\left(1 + \frac{\rhoBB (-1 + r) (-1 + \alpha)}{-\alpha + \rhoBB (-1 + 2 \alpha)} + \frac{r \rhoRR \alpha}{1 - \alpha + \rhoRR (-1 + 2 \alpha)}\right).$$
\end{lemma}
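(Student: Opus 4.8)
The plan is to carry out the computation entirely in the Polya urn picture described above, in which the state at time $t$ is captured by the number of red balls $d_t(\RED)=\alpha_t d_t$ out of the \emph{deterministic} total $d_t=2(t+1)$. In one time step a fresh ball $x$ arrives (red with probability $r$, blue with probability $1-r$), and then exactly one further ball is added, namely a copy of the existing ball $y$ that survives the rejection sampling; no other balls are added. Hence $d_{t+1}=2(t+2)$ and $d_{t+1}(\RED)=d_t(\RED)+\mathbf{1}\{x\in\RED\}+\mathbf{1}\{y\in\RED\}$. Taking conditional expectation given $\alpha_t$ and dividing by $d_{t+1}$, the whole lemma reduces to computing $\Prb[x\in\RED]=r$ and $q(\alpha_t):=\Prb[y\in\RED\mid\alpha_t]$, and then rearranging.

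The one step that needs care is the law of the colour of the accepted ball $y$. I would condition on $x$'s colour. If $x$ is red, a single round of the second stage draws a uniformly random existing ball, which is red with probability $\alpha:=\alpha_t$ and is then kept with probability $\rhoRR$, or blue with probability $1-\alpha$ and then kept with probability $1-\rhoRR$; the rounds are i.i.d.\ and we repeat until a ball is kept, so by the elementary fact that conditioning i.i.d.\ trials on the first success reproduces the single-trial law renormalised to the ``kept'' event, $y$ is red with probability $q_{\RED}(\alpha)=\frac{\alpha\rhoRR}{\alpha\rhoRR+(1-\alpha)(1-\rhoRR)}$. (For this to be well posed, note the denominator is strictly positive: degrees never decrease, so at every time there is at least one ball of each colour, which already gives positivity when $0<\rhoRR<1$, and at the degenerate values $\rhoRR\in\{0,1\}$ the surviving factor is the non-vanishing one.) Symmetrically, if $x$ is blue then $y$ is red with probability $q_{\BLUE}(\alpha)=\frac{\alpha(1-\rhoBB)}{\alpha(1-\rhoBB)+(1-\alpha)\rhoBB}$, so $q(\alpha)=r\,q_{\RED}(\alpha)+(1-r)\,q_{\BLUE}(\alpha)$.

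Putting the pieces together, $\Expct[\alpha_{t+1}\mid\alpha_t]=\frac{d_t(\RED)+r+q(\alpha_t)}{2(t+2)}=\frac{2(t+1)\alpha_t+2F(\alpha_t)}{2(t+2)}$ with $2F(\alpha)=r+q(\alpha)$, and the identity $\frac{2(t+1)\alpha+2F(\alpha)}{2(t+2)}=\alpha+\frac{F(\alpha)-\alpha}{t+2}$ already delivers the stated recursion. It only remains to normalise $F$ to the displayed closed form: expanding $\alpha\rhoRR+(1-\alpha)(1-\rhoRR)=1-\alpha+\rhoRR(-1+2\alpha)$ turns the red contribution into $\frac{r\rhoRR\alpha}{1-\alpha+\rhoRR(-1+2\alpha)}$, while for the blue part I would write $r+(1-r)q_{\BLUE}(\alpha)=1-(1-r)\bigl(1-q_{\BLUE}(\alpha)\bigr)=1-\frac{(1-r)(1-\alpha)\rhoBB}{\alpha(1-\rhoBB)+(1-\alpha)\rhoBB}$ and rewrite the denominator, up to an overall sign, as $-\alpha+\rhoBB(-1+2\alpha)$, which recasts the subtracted fraction exactly as $\frac{\rhoBB(-1+r)(-1+\alpha)}{-\alpha+\rhoBB(-1+2\alpha)}$; thus $2F(\alpha)=1+\frac{\rhoBB(-1+r)(-1+\alpha)}{-\alpha+\rhoBB(-1+2\alpha)}+\frac{r\rhoRR\alpha}{1-\alpha+\rhoRR(-1+2\alpha)}$, as claimed.

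There is no obstacle of real depth here; the only places to be careful are $(i)$ getting the rejection-sampling conditional distribution right — in particular which of $\rho$ or $1-\rho$ multiplies the same-colour versus the opposite-colour draw, together with the degenerate $\rho\in\{0,1\}$ cases — and $(ii)$ the small algebraic regrouping that absorbs the arrival term ``$r$'' together with the blue fraction into the constant ``$1$'' appearing in the statement of $F$. The Markov property that makes $\Expct[\alpha_{t+1}\mid\alpha_t]$ depend on the past only through $\alpha_t$ is precisely the ``key observation'' recorded just before the lemma, so nothing further is needed there.
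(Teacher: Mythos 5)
Your proposal is correct and follows essentially the same route as the paper: reduce to the urn process, compute the acceptance probabilities $\pRR$ and $\pBB$, take the expected one-step increment of $d_t(\RED)$, and rearrange using $d_t=2(t+1)$. The only (immaterial) differences are that you obtain $\pRR,\pBB$ by conditioning the i.i.d.\ trials on the first acceptance rather than via the paper's self-consistency equation $\pBB=(1-\alpha_t)\rhoBB+(\text{rejection prob.})\cdot\pBB$, and you write the increment as $\mathbf{1}\{x\in\RED\}+\mathbf{1}\{y\in\RED\}$ instead of the paper's three-valued random variable $\NN_{t+1}\in\{0,1,2\}$; both give $2F(\alpha_t)=1-(1-r)\pBB(\alpha_t)+r\pRR(\alpha_t)$.
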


\iftoggle{paper}{%
\begin{proof}
Given that the new vertex at time $t+1$ is blue, the probability $\pBB$ that it attaches to a blue vertex satisfies 
$$
\pBB(\at)=(1-\alpha_t)\rhoBB+\alpha_t\rhoBB \pBB(\at)+(1-\alpha_t)(1-\rhoBB) \pBB(\at),
$$
hence 
$$
\pBB(\at)=\frac{\rhoBB-\rhoBB \alpha_t}{\rhoBB+\alpha_t-2\rhoBB \alpha_t}
$$
Similarly, given that the new vertex at time $t+1$ is red, the probability $P_{\RRED\RRED}$ that it attaches to a red vertex satisfies 
$$
\pRR(\at)=\alpha_t\rhoRR+\alpha_t(1-\rhoRR)\pRR(\at)+(1-\alpha_t)\rhoRR \pRR(\at),
$$
hence
$$
\pRR(\at)=\frac{\rhoRR \alpha_t}{1-\alpha_t +\rhoRR(2\alpha_t-1)}.
$$
We later express $\pBB$ and $\pRR$ as a function of $\alpha_t$, i.e., 
\begin{eqnarray} 
\pBB(x)&=&\frac{\rhoBB-\rhoBB x}{\rhoBB+x-2\rhoBB x}~,
\nonumber \\
\pRR(x)&=&\frac{\rhoRR x}{1-x +\rhoRR(1-2x)}~.
\label{eq:pBBpRR}
\end{eqnarray} 
In each step the sum of the degrees increases by $2$, so $d_{t+1} = d_t + 2$. We start from an arbitrary ratio $\alpha_0 = d_0(R)/d_0$. Let $\NN_t(x)$ be a random variable denoting the number of \emph{new} red balls at time $t$ assuming $\alpha_t=x$. 
Then
\begin{equation} 
\label{eq:Xt}
\begin{split}
\NN_{t+1}(x) &= \begin{cases}
  0, & \text{with probability } (1-r)\pBB(x) \\ 
     & \text{(a blue ball entered and chose a blue ball),} \\
  2, & \text{with probability } r\pRR(x) \\
     & \text{(a red ball entered and chose a red ball),} \\
  1, & \text{with the remaining probability} \\
     & \text{(a blue ball chose a red ball or vice versa),}
\end{cases}
\end{split}
\end{equation}
and 
$$d_{t}(R)=d_0(R)+ \sum_{i=1}^t \NN_t(\alpha_{i-1}).$$
We now define
$\cE_t=\Expct\left[\NN_{t+1}(\alpha_{t})\right]$ and calculate it to be
\begin{equation*} 
\begin{split}
\cE_t &=\Expct\left[d_{t+1}(R)-d_{t}(R) \mid \alpha_{t}\right] \\
&=1\cdot \left((1-(1-r)\pBB(\alpha_t)-r\pRR(\alpha_t)\right) + 2\cdot r\pRR(\alpha_t)\\
&=1-(1-r)\pBB(\alpha_t)+r\pRR(\alpha_t)\\
&=1-(1-r)\frac{\rhoBB-\rhoBB \alpha_t}{\rhoBB+\alpha_t-2\rhoBB \alpha_t}+r\frac{\rhoRR \alpha_t}{1-\alpha_t +\rhoRR(1-2\alpha_t)}\\
&=2F(\alpha_t).
\end{split}
\end{equation*}
Substituting $d_{t+1}(R)=2(t+2)\alpha_{t+1}$ and $d_{t}(R)=2(t+1)\alpha_{t}$ and rewriting yields the lemma.
\qed\end{proof}
}

\iftoggle{paper}{}{\vspace{-0.5cm}}
\begin{lemma}
\label{lem:Fprop}
The function $F(x)$ has the following properties:
\begin{enumerate}
	\item $F(x)$ is monotonically increasing.
 	\item $F(x)$ has exactly one fixed point, $\alpha \in \left[0,1\right]$.
 	\item The image of the unit interval by $F(x)$ is contained in the unit interval: \\ $F\left(\left[0,1\right]\right) =\left[\frac{r}{2},\frac{1+r}{2}\right]\subset\left[0,1\right]$.
 	\item If $x < \alpha$ then $x < F(x) < \alpha$ and if $x > \alpha$ then $x > F(x) > \alpha$.
\end{enumerate}
\end{lemma}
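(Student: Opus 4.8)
The four assertions split into a "soft" group (properties~1 and~3), which follow from the monotonicity of $F$, its continuity, and its endpoint values, and the substantive group (properties~2 and~4), which together say that $F$ has a unique fixed point $\alpha$ and that $F-\mathrm{id}$ is positive to the left of $\alpha$ and negative to its right. Throughout I use the form of $F$ extracted in the proof of Lemma~\ref{lem:Expectation}, namely $2F(x)=1+r\,\pRR(x)-(1-r)\,\pBB(x)$, where $\pRR(x)=\rhoRR x/D_R(x)$ and $\pBB(x)=\rhoBB(1-x)/D_B(x)$ are the conditional probabilities that a red, resp.\ blue, arrival attaches to a vertex of its own colour, with $D_R(x)=(1-\rhoRR)+(2\rhoRR-1)x$ and $D_B(x)=\rhoBB+(1-2\rhoBB)x$. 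Assume $0<\rhoRR,\rhoBB<1$ (the pure profiles $\rhoRR,\rhoBB\in\{0,1\}$, which are exactly the perfect homophily/heterophily strategies, make $D_R$ or $D_B$ vanish at an endpoint and are treated directly); then $D_R,D_B>0$ on $[0,1]$, so $\pRR,\pBB$ and hence $F$ are smooth on $[0,1]$, with $\pRR,\pBB\in[0,1]$ there.

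\emph{Properties 1 and 3.} Differentiating the two Möbius functions gives $\pRR'(x)=\rhoRR(1-\rhoRR)/D_R(x)^2>0$ and $\pBB'(x)=-\rhoBB(1-\rhoBB)/D_B(x)^2<0$, whence $2F'=r\pRR'-(1-r)\pBB'>0$, so $F$ is strictly increasing (property~1). Reading $F$ through the urn process, when the urn is monochromatic the expected number of freshly created red balls is immediate: it equals $r$ when all balls are blue (the new vertex is red with probability $r$ and then necessarily attaches to a blue ball) and $1+r$ when all are red. Hence $F(0)=r/2$ and $F(1)=(1+r)/2$, and since $F$ is continuous and increasing, $F([0,1])=[F(0),F(1)]=[\tfrac r2,\tfrac{1+r}2]$, which lies strictly inside $[0,1]$ because $0<r<1$ — property~3.

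\emph{Properties 2 and 4.} Let $G=F-\mathrm{id}$. Then $G(0)=r/2>0$ and $G(1)=(r-1)/2<0$, so $G$ has a zero $\alpha\in(0,1)$ by the intermediate value theorem, and neither $0$ nor $1$ is a fixed point. The heart of the matter is uniqueness, which I will get by showing $G'(x^\ast)<0$ at every zero $x^\ast$ of $G$: this forces $G$ to vanish only once, since between two consecutive zeros $G$ would have to pass from negative (just past the first) to positive (just before the second), creating a third zero. The computation rests on the identities $\pRR(1-\pRR)=\rhoRR(1-\rhoRR)\,x(1-x)/D_R^2$ and $\pBB(1-\pBB)=\rhoBB(1-\rhoBB)\,x(1-x)/D_B^2$, which recast the derivative as
\[
F'(x)\;=\;\frac{r\,\pRR(x)\bigl(1-\pRR(x)\bigr)+(1-r)\,\pBB(x)\bigl(1-\pBB(x)\bigr)}{2\,x(1-x)}\qquad(x\in(0,1)).
\]
At a zero of $G$ one has $2x^\ast=1+r\pRR(x^\ast)-(1-r)\pBB(x^\ast)$; putting $a=r\pRR(x^\ast)\in[0,r]$ and $b=(1-r)\pBB(x^\ast)\in[0,1-r]$, so that $x^\ast(1-x^\ast)=\tfrac14\bigl(1-(a-b)^2\bigr)$, $\pRR(x^\ast)=a/r$ and $\pBB(x^\ast)=b/(1-r)$, the bound $F'(x^\ast)<1$ is equivalent to the quadratic inequality
\[
\tfrac{2-r}{r}\,a^2-2(1-b)\,a+\Bigl(1-2b+\tfrac{1+r}{1-r}\,b^2\Bigr)\;>\;0 .
\]
This quadratic in $a$ has positive leading coefficient, and its minimum over $a\in\mathbb{R}$ equals $(1-b)^2\tfrac{2-2r}{2-r}+\tfrac{2r}{1-r}\,b^2$, which is strictly positive because $0<r<1$ and $1-b\ge r>0$. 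Hence $G'(x^\ast)<0$, so $\alpha$ is the unique fixed point in $[0,1]$ (property~2). Finally, since $\alpha$ is the only zero of $G$ and $G(0)>0>G(1)$, continuity forces $G>0$ on $[0,\alpha)$ and $G<0$ on $(\alpha,1]$; combining this with the monotonicity of $F$ and $F(\alpha)=\alpha$ yields, for $x<\alpha$, both $x<F(x)$ and $F(x)<F(\alpha)=\alpha$, and symmetrically $\alpha<F(x)<x$ for $x>\alpha$ — property~4.

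I expect the uniqueness step to be the real obstacle: extracting the clean closed form for $F'$ and then checking that $F'(x^\ast)<1$ at \emph{every} fixed point (rather than merely somewhere). Note that avoiding the derivative by clearing denominators and studying the degree-$\le 3$ numerator of $F-\mathrm{id}$ does not simplify matters, since a cubic that is positive at $0$ and negative at $1$ may still have three roots in $(0,1)$, and excluding that requires exactly the same local information. For the pure profiles $\rhoRR,\rhoBB\in\{0,1\}$, $F$ is constant on $(0,1)$ and all four statements are verified by inspection (for instance $F\equiv r$, with unique fixed point $\alpha=r$, under perfect homophily).
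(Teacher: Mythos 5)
Your proof is correct, and for the crucial uniqueness claim (property~2) it takes a genuinely different route from the paper. The paper clears denominators, reduces the problem to counting roots of the cubic numerator $Q(x)$ of $F(x)-x$ in $[0,1]$, and then invokes Sturm's theorem with a case analysis over the parameter ranges $\rhoRR,\rhoBB\lessgtr\tfrac12$ (plus separately handled degenerate cases). You instead argue locally: using the identities $1-\pRR(x)=(1-\rhoRR)(1-x)/D_R(x)$ and $1-\pBB(x)=(1-\rhoBB)x/D_B(x)$ you obtain the closed form
\[
F'(x)=\frac{r\,\pRR(x)\bigl(1-\pRR(x)\bigr)+(1-r)\,\pBB(x)\bigl(1-\pBB(x)\bigr)}{2x(1-x)},
\]
and then show $F'(x^\ast)<1$ at every fixed point $x^\ast$ by reducing to a quadratic in $a=r\pRR(x^\ast)$ whose minimum $(1-b)^2\tfrac{2(1-r)}{2-r}+\tfrac{2r}{1-r}b^2$ is strictly positive; since every zero of $F-\mathrm{id}$ is then a strict downcrossing, there can be only one. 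I checked the identity for $F'$, the substitution $x^\ast(1-x^\ast)=\tfrac14(1-(a-b)^2)$, and the completion of the square; all are correct. What your approach buys is a uniform, inspectable argument with no computer-assisted sign evaluations and no branching on $\rhoRR,\rhoBB$; what the paper's Sturm computation buys is, in principle, a purely mechanical certificate, though as written it leans on unstated sign computations. Properties~1, 3 and~4 are handled essentially as in the paper (positivity of $F'$, endpoint values $F(0)=r/2$, $F(1)=(1+r)/2$, then monotonicity plus uniqueness of the fixed point). One small imprecision: your closing remark that ``$F$ is constant on $(0,1)$'' for the pure profiles is true only when \emph{both} $\rhoRR,\rhoBB\in\{0,1\}$; when exactly one of them is extremal, $F$ is not constant. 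This is harmless, because $D_R(x)=(1-\rhoRR)(1-x)+\rhoRR x$ and $D_B(x)=\rhoBB(1-x)+(1-\rhoBB)x$ are strictly positive on the open interval $(0,1)$ for \emph{all} $\rhoRR,\rhoBB\in[0,1]$, so your main argument in fact covers these mixed boundary cases verbatim (the corresponding term $\pRR(1-\pRR)$ or $\pBB(1-\pBB)$ simply vanishes); it would be worth saying so explicitly rather than deferring to ``inspection.''
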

\iftoggle{paper}{%
\begin{proof}
For the first property, observe that
\begin{equation}
\frac{\partial F}{\partial x}  ~=~ 
\frac{1}{2}\left(\frac{(-1 +\rhoBB) \rhoBB (-1 +r)}{(\rhoBB +x -2\rhoBB x)^2} 
- \frac{r (-1 + \rhoRR) \rhoRR}{(-1 + \rhoRR + x - 2 \rhoRR x)^2}\right) 
~>~ 0 
\label{eq:dF}
\end{equation}
for every $x,\rhoRR,\rhoBB\in \left[0,1\right]$ and $r\in \left(0,1\right)$.

For the second property, we define the function $G(x) = F(x) - x$. The roots of $G(x)$ correspond to the fixpoints of $F(x)$ so we will show that $G(x)$ has exactly one real
root in the interval $\left[0,1\right]$. We arrange $G(x)$ as $G(x)=\frac{Q(x)}{W(x)}$ where 
$$W(x)=2 (-x + \rhoBB (-1 + 2 x)) (1 - x + \rhoRR (-1 + 2 x)).$$
Since the denominator $W(x)$ is positive for each $r,x,\rhoBB,\rhoRR\in\left[0,1\right]$, it is enough to show that the numerator $Q(x)$ has exactly one real
root in the interval $\left[0,1\right]$ as shown in Lemma~\ref{clm:uniqueFix} below.

The third property follows from the fact that the function $F(x)$ is strictly monotonically increasing and by evaluating the function $F(x)$ for the two extreme values $F(0)=r/2$, and $F(1)=(1+r)/2$.

Finally, the fourth property follows from the fact that the function is strictly monotonically increasing, that there is only one fixed point and that $F(x)$ maps $\left[0,1\right]$ inside $\left[0,1\right]$.
\qed\end{proof}

\begin{lemma}
\label{clm:uniqueFix}
The polynomial
\begin{eqnarray*}
  Q(x) = 2 (-1 + 2 \rhoBB) (-1 + 2 \rhoRR) x^3 +(-3 + 7 \rhoBB + \rhoBB r + 4 \rhoRR - 10 \rhoBB \rhoRR + r \rhoRR - 4 \rhoBB r \rhoRR) x^2\\
  + (1 - 3 \rhoBB - 2 \rhoBB r - \rhoRR + 3 \rhoBB \rhoRR + 4 \rhoBB r \rhoRR) x + \rhoBB r - \rhoBB r \rhoRR
\end{eqnarray*}
has a unique root in $\left[0,1\right]$.
\end{lemma}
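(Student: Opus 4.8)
The plan is to prove first that $Q$ has at least one root in $[0,1]$ --- which is immediate --- and then that it has at most one, which is the substantive part. Write $Q(x)=a_3x^3+a_2x^2+a_1x+a_0$ with the coefficients read off from the statement. Evaluating at the endpoints gives $Q(0)=\rhoBB\, r\,(1-\rhoRR)\ge 0$ and $Q(1)=-\rhoRR\,(1-\rhoBB)(1-r)\le 0$ (using $0<r<1$); so for $\rhoBB,\rhoRR\in(0,1)$ both inequalities are strict, and the intermediate value theorem already supplies a root in $(0,1)$.

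For uniqueness I would count the roots of $Q$ in $(0,1)$ by transporting the interval to a half-line. The map $x=y/(1+y)$ is a bijection of $(0,\infty)$ onto $(0,1)$, so
$$\widetilde Q(y)\ :=\ (1+y)^3\,Q\!\Big(\tfrac{y}{1+y}\Big)\ =\ Q(1)\,y^3+\big(a_2+2a_1+3a_0\big)\,y^2+\big(a_1+3a_0\big)\,y+Q(0)$$
has exactly as many positive roots, with multiplicity, as $Q$ has in $(0,1)$. By Descartes' rule of signs this number is bounded by, and congruent modulo $2$ to, the number of sign changes in the coefficient vector $\big(Q(1),\,a_2+2a_1+3a_0,\,a_1+3a_0,\,Q(0)\big)$. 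Its outer entries are $\le 0$ and $\ge 0$, so the only way to obtain three sign changes is the pattern $(-,+,-,+)$; if that is excluded the count is exactly $1$, $Q$ has exactly one root in $(0,1)$, and together with the strict endpoint signs this is the unique root in $[0,1]$ for interior $\rhoBB,\rhoRR$.

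The heart of the argument --- and, I expect, the main obstacle --- is thus to rule out $a_2+2a_1+3a_0>0$ and $a_1+3a_0<0$ holding simultaneously. The key observation is that, with $\rhoBB,r$ fixed, both quantities are affine in $\rhoRR$: one finds $a_1+3a_0=(1-3\rhoBB)(1-\rhoRR)+\rhoBB\, r\,(1+\rhoRR)$, which is $\ge 0$ at $\rhoRR=1$ and hence negative only for $\rhoRR$ below its unique zero; and $a_2+2a_1+3a_0=-(1-\rhoBB)+\rhoRR\big(2-4\rhoBB+r(1+\rhoBB)\big)$, which is $\le 0$ at $\rhoRR=0$ and hence positive only for $\rhoRR$ above its unique zero. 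So the two conditions can coexist only if the first zero lies strictly to the right of the second; clearing the (sign-definite) denominators, this reduces to a polynomial inequality in $(\rhoBB,r)\in[0,1]\times(0,1)$ which, after collecting terms, reads
$$(1-3\rhoBB)^2\ +\ r\,(1+\rhoBB)(1+r\rhoBB)\ \ge\ 8\,\rhoBB^2\, r .$$
Viewed as a quadratic in $r$ this has nonnegative leading coefficient $\rhoBB(1+\rhoBB)$ and endpoint values $(1-3\rhoBB)^2$ at $r=0$ and $2(1-\rhoBB)^2$ at $r=1$, both $\ge 0$; its vertex lies in $(0,1)$ only for $\rhoBB$ in a proper subinterval, where one finishes with a short discriminant estimate. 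Finally, the boundary strata where $\rhoBB$ or $\rhoRR$ equals $0$ or $1$ would be handled directly by factoring $Q$ --- for instance $Q=2x(x-1)(x-r)$ when $\rhoBB=\rhoRR=1$, $Q=x(2x-1)(x-1)$ when $\rhoBB=\rhoRR=0$, and $Q=-2(x-1)^2(x-r/2)$ when $\rhoBB=1,\rhoRR=0$ --- noting that $Q$ vanishes at $0$ (resp. $1$) exactly when $W$ does, so such a root is removable for the fixed-point problem and the remaining root (at $x=r$, $\tfrac12$, or $r/2$ in these cases) is the unique one in $[0,1]$.
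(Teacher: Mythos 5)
Your argument is correct in outline and takes a genuinely different route from the paper. The paper proves this lemma by computing the Sturm sequence of $Q$ and counting sign changes of $p_0,\dots,p_3$ at $x=0$ and $x=1$, split into cases according to whether $\rhoRR,\rhoBB$ lie above or below $1/2$ (plus the same boundary strata you list); this is essentially a computer-algebra verification and its correctness is hard to audit by hand. You instead observe the strict endpoint signs $Q(0)=\rhoBB r(1-\rhoRR)>0$, $Q(1)=-\rhoRR(1-\rhoBB)(1-r)<0$ (both of which I have checked), transport $(0,1)$ to $(0,\infty)$ via $x=y/(1+y)$, and apply Descartes' rule to $\bigl(Q(1),\,a_2+2a_1+3a_0,\,a_1+3a_0,\,Q(0)\bigr)$; since the outer entries force an odd count, everything reduces to excluding the pattern $(-,+,-,+)$. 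Your identities $a_1+3a_0=(1-3\rhoBB)(1-\rhoRR)+\rhoBB r(1+\rhoRR)$ and $a_2+2a_1+3a_0=-(1-\rhoBB)+\rhoRR\bigl(2-4\rhoBB+r(1+\rhoBB)\bigr)$ are correct, as is the reduction of the incompatibility of the two zeros to the inequality $(1-3\rhoBB)^2+r(1+\rhoBB)(1+r\rhoBB)\ge 8\rhoBB^2 r$, and your boundary factorizations all check out (and are more candid than the paper about the fact that at these strata $Q$ has extra roots at $0$ or $1$ that cancel against $W$). What your approach buys is a uniform, hand-checkable treatment of all interior parameter values and an explicit two-variable inequality as the single remaining obstruction; what it costs is that this inequality still has to be proved.

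That last step is the only real gap: you assert ``a short discriminant estimate'' without performing it. It does work, and here is the missing computation. Writing $g(r)=\rhoBB(1+\rhoBB)r^2+(1+\rhoBB-8\rhoBB^2)r+(1-3\rhoBB)^2$, the vertex lies in $(0,1)$ only when $1+\rhoBB-8\rhoBB^2<0$ and $6\rhoBB^2-3\rhoBB-1<0$, i.e.\ for $\rhoBB\in\bigl(\tfrac{1+\sqrt{33}}{16},\tfrac{3+\sqrt{33}}{12}\bigr)\approx(0.422,0.729)$, and there one must check that the discriminant
\begin{equation*}
D(\rhoBB)~=~(1+\rhoBB-8\rhoBB^2)^2-4\rhoBB(1+\rhoBB)(1-3\rhoBB)^2~=~28\rhoBB^4-28\rhoBB^3+5\rhoBB^2-2\rhoBB+1
\end{equation*}
is negative; a direct single-variable verification shows $D<0$ on that interval (e.g.\ $D(0.43)\approx-0.21$, $D(0.65)\approx-0.88$, $D(0.72)\approx-0.77$, and $D$ has no roots there), so $g\ge 0$ throughout $[0,1]\times(0,1)$ and the pattern $(-,+,-,+)$ is excluded. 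With that verification written out, your proof is complete and, in my view, more transparent than the paper's.
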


\begin{proof}

In what follows, we employ Sturm's Theorem (to be explained next) in order to bound the number of distinct real roots of $Q(x)$. 

Consider some degree n polynomial $P(x) =a_n x^n+ ... +a_1 x + a_0$ over the reals. The Sturm sequence of $P(x)$ is a sequence of polynomials denoted by $p_0(x),p_1(x),... ,p_m(x)$, where $p_0(x) = P(x)$, $p_1(x) = dP(x)/dx$, and
$p_i(x) = remainder(p_{i-2}(x)/p_{i-1}(x))$ for $i > 1$. This recursive definition terminates at step $m$ such
that $remainder(p_{m-1}(x)/p_m(x)) = 0$. Since the degree of $p_i(x)$ is at most $n-i$, we conclude that
$m\leq n$. Define $SC_p(t)$ to be the number of sign changes in the sequence $p_0(t),p_1(t),... ,p_m(t)$.
We are now ready to state the following theorem attributed to Jacques Sturm, $1829$ (cf.~\cite{basu2005algorithms}).
\begin{theorem}[Sturm's condition]
Consider two reals $a,b$, where $a < b$ and neither of them is a
root of $P(x)$. Then the number of distinct real roots of $P(x)$ in the interval $(a,b)$ is $SC_p(a)-SC_p(b)$.
\end{theorem}

Let's examine the Sturm sequence of $Q$ in $\left(0,1\right)$ 
for every $\rhoRR,\rhoBB$, checking different $\rho$ ranges as follows.

For $\rhoBB\in(0,\frac{1}{2})$ and $\rhoRR\in(\frac{1}{2},1)$:
To find the number of roots between $0$ and $1$, 
we first evaluate $p_0(x),p_1(x),p_2(x)$ and $p_3(x)$ at $x=1$ and get the sequences of signs of the results: $\left\{-,-,-,-\right\}$, which contains no sign changes. Evaluating $p_0(x),p_1(x),p_2(x)$ and $p_3(x)$ at $x=0$ yields two optional sequences of signs of the results: for $1/2 < \rhoRR < (1 + 2 r)/(1 + 4 r)$ and $(-1 + \rhoRR)/(-3 + 3  \rhoRR - 2 r + 4  \rhoRR r) < 
   \rhoBB < 1/2$ we get
$\left\{+,-,-,-\right\}$. Otherwise, we get the sequences of signs $\left\{+,+,*,-\right\}$ (where $*$ is $+$ or $-$). All of the sequences contain one sign changes, hence, the number of roots of $Q$ between $0$ and $1$ is $1- 0 = 1$ as needed.

For $\rhoBB\in(\frac{1}{2},1)$ and $\rhoRR\in(0,\frac{1}{2})$:
To find the number of roots between $0$ and $1$, we evaluate $p_0(x),p_1(x),p_2(x)$ and $p_3(x)$ at $x=0$ and get the sequence of signs: $\left\{+,-,+,-\right\}$ which contains three sign changes. The same procedure for $x=1$ gives for $1/2 < \rhoBB < (-3 + 2 r)/(-5 + 4 r)$ and $(1 - \rhoBB)/(5 - 7 \rhoBB - 2 r + 4 \rhoBB r) <  \rhoRR < 1/2$ the sign sequences: $\left\{-,-,+,-\right\}$. Otherwise, we get the sequences $\left\{-,+,*,-\right\}$ (where $*$ is $+$ or $-$) . Since all of these contain two sign changes, we get that the number of roots of $Q$ between $0$ and $1$ is $3-2 = 1$ as needed.

For $\rhoRR,\rhoBB\in(0,\frac{1}{2})$ or $\rhoRR,\rhoBB\in(\frac{1}{2},1)$ we get that $Q(1)<1$ and $Q(0)>0$. Observe that $Q(x)=\infty$ when $x\to\infty$ and $Q(x)=-\infty$ when $x\to-\infty$. This implies that there are one or three roots in both intervals $(-\infty,0)$ and $(1,\infty)$. Knowing that $Q(x)$ has exactly three roots concludes the claim that $G$ has exactly one root in $[0, 1]$.

Finally, when either $\rhoRR=1$ and $0\leq\rhoBB<1$, or $\rhoBB=0$ and $0<\rhoRR<1$, there is a root of $M$ at $x=0$, hence these cases must be dealt with separately. Another special case occurs when $\rhoRR,\rhoBB=\frac{1}{2}$. For each of these special cases
we explicitly solve the equation $M(x)=0$ and show that there is a unique root at $\left(0,1\right)$.
Lemma \ref{clm:uniqueFix} follows.
\qed\end{proof}
}

Assume w.l.o.g. that $\alpha_t < \alpha$. By Lemma~\ref{lem:Fprop} $\alpha_t < F(\alpha_t) < \alpha$, so by Lemma~\ref{lem:Expectation} 
$\alpha_t < \Expct\left[\alpha_{t+1} \mid \alpha_{t}\right] < \alpha$. Taking expectations, we get that
$\Expct[\alpha_t] < \Expct[\alpha_{t+1}] < \Expct[\alpha]=\alpha$.
We have thus shown that the expected value of $\alpha_t$ converges to the fixed point $\alpha$ of $F(x)$. We have thus established the following.

\begin{theorem}
\label{thm:exp_conv}
Given the rate $r$ of red nodes and the mixing matrix $\STRTGY$, for any initial graph, as $t$ tends to infinity, the expected fraction of red balls, $\Expct[\alpha_t]$, converges to the unique real $\alpha \in (0,1)$ satisfying  the equation $F(\alpha)=\alpha$, or
\begin{equation*}
\begin{split}
2\alpha ~=~1 + \frac{\rhoBB (-1 + r) (-1 + \alpha)}{-\alpha + \rhoBB (-1 + 2 \alpha)} + \frac{r \rhoRR \alpha}{1 - \alpha + \rhoRR (-1 + 2 \alpha)}~.
\end{split}
\end{equation*}
\end{theorem}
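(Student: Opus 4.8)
The plan is to extract the convergence statement directly from the drift recurrence of Lemma~\ref{lem:Expectation} together with the structural properties of $F$ collected in Lemma~\ref{lem:Fprop}, and then to obtain the displayed identity by merely writing out $F(\alpha)=\alpha$ with the explicit formula for $F$.

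First I would set $e_t=\Expct[\alpha_t]$ and take expectations in Lemma~\ref{lem:Expectation}, obtaining $e_{t+1}=e_t+\frac{\Expct[F(\alpha_t)-\alpha_t]}{t+2}$. The mechanism driving the proof is Lemma~\ref{lem:Fprop}(4): the drift $F(x)-x$ is strictly positive on $[0,\alpha)$, strictly negative on $(\alpha,1]$, and vanishes only at the \emph{unique} fixed point $\alpha$ (whose uniqueness is Lemma~\ref{lem:Fprop}(2), established through the Sturm-sequence computation of Lemma~\ref{clm:uniqueFix}), while Lemma~\ref{lem:Fprop}(3) places $\alpha\in[r/2,(1+r)/2]\subset(0,1)$. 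Thus on $\{\alpha_t<\alpha\}$ one has $\alpha_t<F(\alpha_t)<\alpha$, hence $\alpha_t<\Expct[\alpha_{t+1}\mid\alpha_t]<\alpha$, and symmetrically on $\{\alpha_t>\alpha\}$; trapping $e_t$ on one side of $\alpha$ and making it monotone, hence convergent to some $L$. To identify $L=\alpha$, note that monotonicity makes the series $\sum_t\frac{\Expct[F(\alpha_t)-\alpha_t]}{t+2}$ telescope to $L-e_0$, so it converges absolutely; since $\sum_t\frac1{t+2}$ diverges, the drift $\Expct[F(\alpha_t)-\alpha_t]$ must tend to $0$, and by continuity of $F$ and uniqueness of the fixed point this forces $L=\alpha$. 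The final step is routine algebra: substituting the formula for $F(x)$ from Lemma~\ref{lem:Expectation} into $F(\alpha)=\alpha$ and multiplying by $2$ yields exactly $2\alpha=1+\frac{\rhoBB(-1+r)(-1+\alpha)}{-\alpha+\rhoBB(-1+2\alpha)}+\frac{r\rhoRR\alpha}{1-\alpha+\rhoRR(-1+2\alpha)}$.

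The step I expect to be the real obstacle is justifying the ``w.l.o.g.\ $\alpha_t<\alpha$'' move, i.e.\ the genuine monotonicity of $e_t$: $\alpha_t$ is a random variable whose law need not sit on one side of $\alpha$, and a single increment $\alpha_{t+1}-\alpha_t=\frac{N_{t+1}-2\alpha_t}{2(t+2)}$ (with $N_{t+1}\in\{0,1,2\}$) can carry $\alpha_t$ across $\alpha$, so $\Expct[F(\alpha_t)-\alpha_t]$ is not obviously sign-definite once $t\ge 1$. I would close this gap in one of two ways. The heavier but cleaner route treats the recurrence as a stochastic-approximation (Robbins--Monro) scheme: the increment equals $\frac1{t+2}\bigl((F(\alpha_t)-\alpha_t)+(\tfrac12 N_{t+1}-F(\alpha_t))\bigr)$, with step sizes $\tfrac1{t+2}$ whose squares are summable and a bounded, mean-zero martingale-difference noise term, and the mean ODE $\dot x=F(x)-x$ has $\alpha$ as a globally attracting equilibrium on $[0,1]$ by Lemma~\ref{lem:Fprop}; standard theory then gives $\alpha_t\to\alpha$ almost surely, whence $\Expct[\alpha_t]\to\alpha$ by bounded convergence. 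The more elementary route shows $\mathrm{Var}(\alpha_t)\to 0$ from the $O(1/(t+2))$ bound on the increments, so that $\Expct[F(\alpha_t)-\alpha_t]$ and $F(e_t)-e_t$ share a limit and the deterministic recurrence $e_{t+1}\approx e_t+\frac{F(e_t)-e_t}{t+2}$ — for which the monotone-trapping argument of the previous paragraph is literally valid — governs the limit. Beyond this point everything is bookkeeping.
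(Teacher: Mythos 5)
Your proposal follows essentially the same route as the paper: the drift recurrence of Lemma~\ref{lem:Expectation}, the fixed-point and trapping properties of Lemma~\ref{lem:Fprop}, and then routine algebra to turn $F(\alpha)=\alpha$ into the displayed cubic relation. In fact the paper's own proof is precisely the loose version you flag as the obstacle: it says ``Assume w.l.o.g.\ that $\alpha_t<\alpha$,'' deduces $\Expct[\alpha_t]<\Expct[\alpha_{t+1}]<\alpha$, and declares convergence, without addressing either that $\alpha_t$ is a random variable whose law can straddle $\alpha$, or that monotone boundedness only yields convergence to \emph{some} limit $L$ that still has to be identified as $\alpha$. So your diagnosis is accurate, and your first repair --- writing $\alpha_{t+1}-\alpha_t=\frac{1}{t+2}\bigl((F(\alpha_t)-\alpha_t)+(\tfrac12\NN_{t+1}-F(\alpha_t))\bigr)$ and invoking Robbins--Monro with square-summable steps, bounded martingale-difference noise, and the globally attracting equilibrium of $\dot x=F(x)-x$ --- is the standard, correct way to close the gap; it even delivers almost-sure convergence, which is stronger than the stated claim. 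Two caveats on your alternative arguments. The telescoping step only forces the \emph{expected} drift $\Expct[F(\alpha_t)-\alpha_t]$ toward zero along a subsequence; since $\Expct[F(\alpha_t)]\neq F(\Expct[\alpha_t])$ and $F(x)-x$ changes sign at $\alpha$, a vanishing expected drift does not by itself pin $\lim_t\Expct[\alpha_t]$ to $\alpha$ --- one needs concentration of $\alpha_t$ near $\alpha$, which is exactly what the stochastic-approximation route supplies. Likewise, $\mathrm{Var}(\alpha_t)\to 0$ does not follow from the $O(1/(t+2))$ increment bound alone (independent increments of that size would leave a convergent but nonzero variance tail); the contraction must again come from the attracting drift. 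Net: same skeleton as the paper, with a genuine gap in the published argument that your first repair fixes and your second, as stated, does not.
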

Hence the limit $\alpha$ is the solution of the cubic equation 
\iftoggle{paper}{}{\vspace{-6pt}}
\begin{eqnarray*}
&  (2 - 4 \rhoBB - 4 \rhoRR + 8 \rhoBB \rhoRR) \alpha^3  + (-3 + 7 \rhoBB + \rhoBB r + 4 \rhoRR - 10 \rhoBB \rhoRR + r \rhoRR - 4 \rhoBB r \rhoRR) \alpha^2\\
  &\mbox{\hskip 1cm} +  (1 - 3 \rhoBB - 2 \rhoBB r - \rhoRR + 3 \rhoBB \rhoRR + 4 \rhoBB r \rhoRR) \alpha  +  \rhoBB r - \rhoBB r \rhoRR ~=~0.
\end{eqnarray*}
Note that this limit is independent of the initial values $d_0$ and $\alpha_0$ of the system.


\subsubsection*{Existence of a Nash Equilibrium.}
Having shown that for any given strategy profile $\STRTGY$ the expected fraction of red node degrees converges to $\alpha$, we examine the influence of the different strategies on the utility functions.

\begin{lemma}
\label{lem:alpha_rho_relation}
The limit $\alpha$ and $\Expct[\alpha_t]$ are monotone in the mixing matrix entries, i.e., both increase with increasing $\rhoRR$ and decrease with increasing $\rhoBB$.
\end{lemma}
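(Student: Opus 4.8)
The plan is to establish both assertions (monotonicity of $\Expct[\alpha_t]$ for every $t$ \emph{and} of the limit $\alpha$) in one shot, via a \emph{stochastic coupling} of two copies of the urn process run with different mixing matrices. Order mixing matrices by declaring $\STRTGY \preceq \STRTGY'$ when $\rhoRR \le \rhoRR'$ and $\rhoBB \ge \rhoBB'$ (same arrival rate $r$). It suffices to show that under this order $\alpha_t$ is stochastically nondecreasing for every $t$: then $\Expct[\alpha_t]\le\Expct[\alpha_t']$, and letting $t\to\infty$ and invoking Theorem~\ref{thm:exp_conv} gives $\alpha\le\alpha'$; specializing $\STRTGY,\STRTGY'$ to differ in a single coordinate yields exactly ``increasing in $\rhoRR$'' and ``decreasing in $\rhoBB$''. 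Recall (see the proof of Lemma~\ref{lem:Expectation}) that, conditioned on $\alpha_t=x$, one step adds $\NN_{t+1}(x)\in\{0,1,2\}$ new red balls with $\Prb[\NN_{t+1}(x)\ge 2]=r\,\pRR(x)$ and $\Prb[\NN_{t+1}(x)\ge 1]=1-(1-r)\,\pBB(x)$, where, crucially, $\pRR(x)=\rhoRR x/(1-x+\rhoRR(1-2x))$ depends only on $(x,\rhoRR)$ and $\pBB(x)=\rhoBB(1-x)/(x+\rhoBB(1-2x))$ depends only on $(x,\rhoBB)$.

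First I would prove one-step monotonicity. Each of the four partial derivatives $\frac{\partial\pRR}{\partial x}$, $\frac{\partial\pRR}{\partial\rhoRR}$, $-\frac{\partial\pBB}{\partial x}$, $\frac{\partial\pBB}{\partial\rhoBB}$ is $\ge 0$ on $x\in(0,1)$, $\rhoRR,\rhoBB\in[0,1]$, $r\in(0,1)$: in every case the numerator collapses to a manifestly nonnegative quantity (for instance $\frac{\partial\pRR}{\partial\rhoRR}=\frac{x(1-x)}{(1-x+\rhoRR(1-2x))^2}$ and $\frac{\partial\pBB}{\partial\rhoBB}=\frac{x(1-x)}{(x+\rhoBB(1-2x))^2}$, and similarly the $x$-derivatives reduce to $\rhoRR(1+\rhoRR)$ and $-\rhoBB(1-\rhoBB)$ over a positive square), and the denominators are positive on the relevant range (cf.\ the positivity of $W$ in Lemma~\ref{lem:Fprop}); the boundary matrices $\STRH,\STRT$ are handled by continuity or direct inspection. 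Hence $\Prb[\NN_{t+1}(x)\ge 2]$ is nondecreasing in $x$ and in $\rhoRR$ and independent of $\rhoBB$, while $\Prb[\NN_{t+1}(x)\ge 1]$ is nondecreasing in $x$, nonincreasing in $\rhoBB$, and independent of $\rhoRR$. A $\{0,1,2\}$-valued law is stochastically determined by these two tail probabilities, so whenever $x\le x'$ and $\STRTGY\preceq\STRTGY'$ the law of $\NN_{t+1}(x)$ under $\STRTGY$ is stochastically dominated by that of $\NN_{t+1}(x')$ under $\STRTGY'$; the quantile coupling (shared color coin plus one shared $\mathrm{Unif}[0,1]$ per step, fed through the monotone inverse CDF) therefore realizes the increment as a nondecreasing function of $(x,\rhoRR,-\rhoBB)$.

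Then I would run the induction on $t$. Fix $r$ and $G_0$, couple the two urn processes step by step with this shared randomness, and note that $d_t=2(t+1)$ is deterministic and identical in both. Base case: $\alpha_0=\alpha_0'$. Inductive step: assume $\alpha_t\le\alpha_t'$; feeding the current ratios into the monotone one-step coupling gives $\NN_{t+1}\le\NN_{t+1}'$ almost surely, whence $\alpha_{t+1}=\bigl(2(t+1)\alpha_t+\NN_{t+1}\bigr)/\bigl(2(t+2)\bigr)\le\bigl(2(t+1)\alpha_t'+\NN_{t+1}'\bigr)/\bigl(2(t+2)\bigr)=\alpha_{t+1}'$. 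Thus $\alpha_t\le\alpha_t'$ a.s.\ for all $t$, so $\Expct[\alpha_t]\le\Expct[\alpha_t']$, and $\alpha\le\alpha'$ in the limit by Theorem~\ref{thm:exp_conv}, which is the claim.

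The main obstacle is making all four monotonicities act coherently within a \emph{single} coupling — concretely, checking that when $\STRTGY\preceq\STRTGY'$ and $\alpha_t\le\alpha_t'$ the increase of the state $x$ and the decrease of $\rhoBB$ push $\pBB$ in the \emph{same} (downward) direction, so that $\pBB(\alpha_t;\rhoBB)\ge\pBB(\alpha_t';\rhoBB')$; once that is in place the rest is sign-bookkeeping. As a lighter alternative for the limit alone, one may instead differentiate the identity $F(\alpha)=\alpha$ implicitly: by Property~4 of Lemma~\ref{lem:Fprop} the map $F-x$ changes sign from $+$ to $-$ at $\alpha$, so $\partial F/\partial x<1$ there, and hence $\partial\alpha/\partial\rhoRR$ has the sign of $\partial F/\partial\rhoRR>0$ and $\partial\alpha/\partial\rhoBB$ has the sign of $\partial F/\partial\rhoBB<0$; but this argument does not by itself deliver the statement about $\Expct[\alpha_t]$, which is why I would go through the coupling.
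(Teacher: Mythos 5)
Your proposal is correct, and its engine is the same as the paper's: prove that the one-step law of the number $\NN_{t+1}(x)$ of new red balls is stochastically monotone in the current ratio $x$ and in $(\rhoRR,-\rhoBB)$ (via the signs of the partial derivatives of $\pRR$ and $\pBB$, which you compute correctly), and then propagate this by induction on $t$. The difference is one of realization: the paper carries the induction abstractly in the stochastic order ($\alpha_t\prec\alpha'_t$, Lemmas~\ref{clm:x}--\ref{clm:alphaVsalpha'}), invoking Theorem~\ref{thm:stocDom}(2) to add stochastically ordered increments, whereas you build an explicit quantile coupling under which $\NN_{t+1}\le\NN'_{t+1}$ and hence $\alpha_t\le\alpha'_t$ hold pointwise. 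These are equivalent, and your coupling is arguably the more careful route, since the closure of $\prec$ under sums that the paper uses in the step $d_t(\RED)+\NN_{t+1}(\alpha_t)\prec d'_t(\RED)+\NN'_{t+1}(\alpha'_t)$ involves dependent summands, exactly the joint structure your coupling makes explicit. For the limit $\alpha$, the paper argues directly from $F<F'$ pointwise together with uniqueness of the fixed point and Lemma~\ref{lem:Fprop}(4) --- essentially your ``lighter alternative'' --- while your primary route passes $\Expct[\alpha_t]\le\Expct[\alpha'_t]$ to the limit via Theorem~\ref{thm:exp_conv}; both work. One small caveat: as written your argument delivers only weak inequalities, whereas the paper's conclusion (and its use in Theorems~\ref{thm:basic_game} and~\ref{thm:general_nash}, where strict dominance of strategies is claimed) is strict. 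This is easily repaired inside your framework: for $\epsilon'>0$ and $0<\alpha_0<1$ the coupling gives $\Prb[\NN_1<\NN'_1]>0$, and the recursion $\alpha_{t+1}-\alpha'_{t+1}\le\frac{t+1}{t+2}\left(\alpha_t-\alpha'_t\right)$ shows that a strict gap, once created, never closes, so in fact $\Expct[\alpha_t]<\Expct[\alpha'_t]$ for all $t\ge1$.
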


\iftoggle{paper}{%
\begin{proof}
We show (strict) monotonicity in $\rhoRR$; a similar proof can be obtained for  $\rhoBB$. 
Consider two urn processes $\PROCESS$ and $\PROCESS'$ corresponding to the games $\bpa(n,G_0,r,\STRTGY)$ and $\bpa(n,G_0,r,\STRTGY')$, where
$$
\STRTGY = \left(
  \begin{array}{cc}
    \rhoRR & 1-\rhoRR \\
    1-\rhoBB & \rhoBB \\
  \end{array}
\right)
~~~~\mbox{ and }~~~~
\STRTGY' = \left(
  \begin{array}{cc}
    \rhoRR+\epsilon' & 1-(\rhoRR+\epsilon') \\
    1-\rhoBB & \rhoBB \\
  \end{array}\right)$$ 
for some $\epsilon'>0$.
Denote by $\alpha_t=\frac{d_t(\RED)}{d_t}$ and $\alpha_t'=\frac{d'_t(\RED)}{d_t}$ the fraction of red balls at time $t$ in $\PROCESS$ and $\PROCESS'$ respectively.
Let $\alpha=\lim_{t\rightarrow \infty} \Expct[\alpha_t]$ and $\alpha'=\lim_{t\rightarrow \infty} \Expct[\alpha'_t]$. 

In order to prove the first part of the lemma 
(i.e., the claim on the limit $\alpha$)
we show that $\alpha<\alpha'$.
Let $F(x)$ and $F'(x)$ be the functions defined for
the process $\PROCESS$ and $\PROCESS'$ respectively.
Observe that $\partial F/\partial\rhoRR>0$ for each $\rho,r\in[0,1]$ and $x\in(0,1)$, so $F(x)<F'(x)$ for every $x\in(0,1)$. Note that $F(\alpha)=\alpha$ and $F'(\alpha')=\alpha'$ are the unique fixed points of $F(x)$ and $F'(x)$, respectively, hence $\alpha=F(\alpha)<F'(\alpha')=\alpha'$ as required.


The proof of the second part of the lemma 
(i.e., the claim on $\Expct[\alpha_t]$)
uses stochastic domination (see cf.~\cite{shaked2007stochastic}). We give the formal definition and a basic theorem that we use.
\begin{definition}[Stochastic domination]
Let $X$ and $Y$ be two random variables, not necessarily
on the same probability space. The random variable $X$ is {\em stochastically smaller than} $Y$, denoted $X\preceq Y$, if $\Prb[X> z]\leq\Prb[Y> z]$ for every $z\in\Rnum$. If additionally $\Prb[X>z]<\Prb[Y>Z]$ for some $z$, then $X$ is {\em stochastically strictly less than} $Y$, denoted $X\prec Y$.
\end{definition}
\begin{theorem}[stochastic order]
\label{thm:stocDom}
Let $X$ and $Y$ be two random variables, not necessarily on the same probability space.
\begin{enumerate}
	\item Suppose $X\prec Y$. Then $\Expct[U(X)]<\Expct[U(Y)]$ for any strictly increasing continuous utility function $U$.
	\item Suppose $X_1\prec Y_1$ and $X_2\prec Y_2$, for four random variables $X_1, Y_1,X_2$ and $Y_2$. Then $aX_1 + bY_1 \prec aX_2 + bY_2$ for any two constants $a,b>0$.
\end{enumerate}
\end{theorem}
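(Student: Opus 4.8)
The plan is to derive both parts from the quantile (Strassen) coupling that underlies the elementary theory of the usual stochastic order, exactly as in~\cite{shaked2007stochastic}. For a random variable $Z$ write $\bar F_Z(z)=\Prb[Z>z]$ for its (non-increasing, right-continuous) survival function and let $q_Z$ denote the right-continuous generalized inverse, so that $q_Z(U)$ has the law of $Z$ whenever $U$ is uniform on $(0,1)$. I would first record the standard equivalence $X\preceq Y \iff q_X\le q_Y$ on $(0,1)$, so that a single uniform $U$ supplies a coupling $(\hat X,\hat Y)=(q_X(U),q_Y(U))$ with the correct marginals and $\hat X\le\hat Y$ almost surely; and I would note that, by the definition of $\prec$ together with right-continuity of survival functions, $X\prec Y$ forces $q_X<q_Y$ on some subinterval $J\subseteq(0,1)$ of positive length, hence $\hat X<\hat Y$ on the positive-probability event $\{U\in J\}$. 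One more small fact I would isolate at the outset: if $Z_1\le Z_2$ almost surely and $Z_1,Z_2$ have the same law, then $Z_1=Z_2$ almost surely (test against a bounded strictly increasing function to sidestep integrability).

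For the first part, monotonicity of $U$ gives $U(\hat X)\le U(\hat Y)$ surely and strict monotonicity gives $U(\hat X)<U(\hat Y)$ on $\{U\in J\}$; hence $\Expct[U(Y)]-\Expct[U(X)]=\Expct\big[U(\hat Y)-U(\hat X)\big]$ is the expectation of a nonnegative random variable that is strictly positive with positive probability, so it is strictly positive. Continuity of $U$ is used only to keep $U(\hat X),U(\hat Y)$ measurable and the expectations meaningful.

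For the second part --- which I read with the hypotheses $X_1\prec X_2$, $Y_1\prec Y_2$ and with the two members of each pair independent (the regime of the imported result; the bare conclusion fails for dependent pairs, and in the network application, where $X_i$ and $Y_i$ are functionals of one evolving process, one instead couples the two whole processes via shared randomness) --- I would couple $(X_1,X_2)$ by one uniform and $(Y_1,Y_2)$ by an independent uniform, getting $\hat X_1\le\hat X_2$ and $\hat Y_1\le\hat Y_2$ simultaneously almost surely with $\Prb[\hat X_1<\hat X_2]>0$. Then $a\hat X_1+b\hat Y_1\le a\hat X_2+b\hat Y_2$ almost surely, so $aX_1+bY_1\preceq aX_2+bY_2$; and if these two had the same law, the small fact above would force $a\hat X_1+b\hat Y_1=a\hat X_2+b\hat Y_2$ almost surely, which, since $a,b>0$ and both coordinates are ordered, contradicts $\Prb[\hat X_1<\hat X_2]>0$. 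Hence the combination is strictly stochastically larger.

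The only step I expect to require real care is upgrading $\preceq$ to $\prec$: the weak inequalities $\Expct[U(X)]\le\Expct[U(Y)]$ and $aX_1+bY_1\preceq aX_2+bY_2$ fall straight out of the coupling, whereas strictness needs an interval on which the coupled variables are genuinely strictly ordered --- this is where right-continuity of survival functions and the strict monotonicity of $U$, resp.\ of $(x,y)\mapsto ax+by$, enter --- and then the transfer back to survival functions via the auxiliary fact. A secondary point is simply to state the independence hypothesis of the second part explicitly, since the conclusion is not valid without it.
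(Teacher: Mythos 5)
The paper does not actually prove this theorem: it is imported as a known fact with a citation to \cite{shaked2007stochastic}, so there is no in-paper argument to compare against. Your quantile-coupling proof is the standard one and is correct: the equivalence $X\preceq Y\iff q_X\le q_Y$ on $(0,1)$, the observation that a strict gap $\Prb[X>z_0]<\Prb[Y>z_0]$ yields $q_X<q_Y$ on the interval $\bigl(\Prb[X>z_0],\Prb[Y>z_0]\bigr)$ of positive length, and the auxiliary fact that an a.s.\ ordered pair with equal laws must be a.s.\ equal together give both the weak conclusions and the upgrade from $\preceq$ to $\prec$. You also correctly identify the two defects in the statement as printed. First, the variable pairing in part~2 is garbled (the hypotheses order $X_1$ against $Y_1$ and $X_2$ against $Y_2$, which are the two summands of the \emph{same} side of the conclusion); your relabelling to ``$X_1\prec X_2$, $Y_1\prec Y_2$ implies $aX_1+bY_1\prec aX_2+bY_2$'' is the reading consistent with how the theorem is invoked in Lemma~\ref{clm:alphaVsalpha'}. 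Second, part~2 is false without an independence hypothesis within each side (take $X_2=1-X_1$ with $X_1$ uniform on $\{0,1\}$ versus an i.i.d.\ pair with the same marginals), and your remark that the paper's application adds \emph{dependent} increments --- so that one must really couple the two evolving processes through shared randomness rather than invoke the convolution closure verbatim --- is a substantive and accurate caveat about how the theorem is used downstream. The only loose end, shared with the statement itself, is integrability of $U(X)$ and $U(Y)$ in part~1, which is harmless here since all random variables in the application are bounded.
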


Let $\NN_{t+1}(x)$ (respectively, $\NN'_{t+1}(x)$) be a random variable denoting the number of new red balls at time $t+1$ in $\PROCESS$ (resp., $\PROCESS'$) assuming $\alpha_t=x$ (resp., $\alpha'_t=x$). 

\begin{lemma}
\label{clm:x}
$\NN_{t+1}(x) \prec \NN'_{t+1}(x)$ for any $0<x<1$ and integer $t\ge 0$.
\end{lemma}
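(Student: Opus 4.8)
The plan is to compare the two three-valued distributions directly, since stochastic domination between $\{0,1,2\}$-valued random variables is simply a statement about their tail probabilities $\Prb[\,\cdot>z\,]$. By~\eqref{eq:Xt}, conditioned on $\alpha_t=x$ the variable $\NN_{t+1}(x)$ equals $0$ with probability $(1-r)\pBB(x)$, equals $2$ with probability $r\pRR(x)$, and equals $1$ with the remaining probability $1-(1-r)\pBB(x)-r\pRR(x)$. Passing from the urn process $\PROCESS$ to $\PROCESS'$ changes only the red homophily parameter, from $\rhoRR$ to $\rhoRR+\epsilon'$, and leaves $\rhoBB$ (hence $\pBB$) intact; thus $\NN'_{t+1}(x)$ has the same shape of distribution but with $\pRR'(x)$ in place of $\pRR(x)$, where $\pRR'$ is obtained from~\eqref{eq:pBBpRR} by replacing $\rhoRR$ with $\rhoRR+\epsilon'$. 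So it suffices to verify that $\Prb[\NN_{t+1}(x)>z]\le\Prb[\NN'_{t+1}(x)>z]$ for every $z\in\Rnum$, with strict inequality for at least one $z$.

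Next I run the (finite) case analysis on $z$. For $z<0$ both tails equal $1$, and for $z\ge 2$ both equal $0$. For $0\le z<1$ the tail is $1-(1-r)\pBB(x)$ in \emph{both} processes, because the probability of the outcome $0$ is unaffected by $\rhoRR$; so the two tails are equal here. The only interesting range is $1\le z<2$, where $\Prb[\NN_{t+1}(x)>z]=r\,\pRR(x)$ and $\Prb[\NN'_{t+1}(x)>z]=r\,\pRR'(x)$. Consequently the entire lemma reduces to the scalar inequality $\pRR(x)<\pRR'(x)$ for every $x\in(0,1)$.

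For this last point, differentiate $\pRR(x)$ from~\eqref{eq:pBBpRR} with respect to $\rhoRR$: writing $D(x,\rhoRR)$ for its denominator, a one-line computation gives $\partial\pRR/\partial\rhoRR = x(1-x)/D(x,\rhoRR)^2$, which is strictly positive for $x\in(0,1)$ ($D$ is nonzero throughout the admissible parameter range, as was already implicitly used when $\pRR$ was obtained as the solution of its defining fixed-point equation). Hence $\pRR(x)$ is strictly increasing in $\rhoRR$ on $(0,1)$, and since $\epsilon'>0$ we get $\pRR(x)<\pRR'(x)$; multiplying by $r>0$ yields the strict tail inequality on $1\le z<2$. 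Combined with the equalities in the other ranges, this gives $\NN_{t+1}(x)\prec\NN'_{t+1}(x)$, as claimed.

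The only place requiring any care — and it is minor — is noticing that the probability of the \emph{middle} outcome $1$ equals $1$ minus the two extreme probabilities, so it drops out of the tail comparison and the stochastic-domination claim collapses to the monotonicity of $\pRR$ in $\rhoRR$; the rest is the elementary sign check on $\partial\pRR/\partial\rhoRR$. The companion monotonicity in $\rhoBB$ needed for Lemma~\ref{lem:alpha_rho_relation} follows by the symmetric argument, differentiating $\pBB$ instead.
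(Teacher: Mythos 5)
Your proof is correct and follows essentially the same route as the paper's: both reduce the stochastic-domination claim to the observations that $\Prb[\NN_{t+1}(x)=0]=(1-r)\pBB(x)$ is unaffected by the change in $\rhoRR$, while $\Prb[\NN_{t+1}(x)=2]=r\,\pRR(x)$ strictly increases because $\pRR$ is strictly increasing in $\rhoRR$. You simply make explicit the tail-by-tail case analysis and the derivative computation $\partial\pRR/\partial\rhoRR=x(1-x)/D^2>0$ that the paper leaves implicit.
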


\begin{proof}
By Eq. (\ref{eq:pBBpRR}) and (\ref{eq:Xt}),
\begin{equation*} 
\begin{split}
\Prb[\NN_{t+1}(x)=0]= \Prb[\NN'_{t+1}(x)=0],\\
\Prb[\NN_{t+1}(x)=2]< \Prb[\NN'_{t+1}(x)=2].
\end{split}
\end{equation*}
Hence $\Prb[\NN_{t+1}(x)>z]\leq \Prb[\NN'_{t+1}(x)> z]$ for every $z\in\Rnum$ and $\Prb[\NN_1(\alpha_0)> 1] < \Prb[\NN_1'(\alpha_0)> 1]$, yielding $\NN_{t+1}(x) \prec \NN'_{t+1}(x)$.
\qed\end{proof}

\begin{lemma}
\label{clm:alphaXrelation}
For $t\ge 0$, if $\alpha_t\prec \alpha'_t$ then $\NN_{t+1}(\alpha_t) \prec \NN'_{t+1}(\alpha'_t)$.
\end{lemma}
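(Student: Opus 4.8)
The plan is to derive the claim from two monotonicity facts that are essentially already in hand, combined through a conditioning argument on the (random) values of $\alpha_t$ and $\alpha'_t$. The first fact is Lemma~\ref{clm:x} itself: for every fixed $x\in(0,1)$ we have $\NN_{t+1}(x)\prec\NN'_{t+1}(x)$. Writing $g_z(x)=\Prb[\NN_{t+1}(x)>z]$ and $g'_z(x)=\Prb[\NN'_{t+1}(x)>z]$, this says $g_z(x)\le g'_z(x)$ for every $z\in\Rnum$, with strict inequality at $z=1$ (the two processes share the same $\rhoBB$, so only the value $2$ gains probability, via $r\pRR(x)<r\pRR'(x)$, where the strictness on $(0,1)$ comes from $\pRR$ being strictly increasing in $\rhoRR$). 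The second fact is monotonicity in the argument: by Eq.~(\ref{eq:Xt}), for each fixed $z$ the function $g_z$ (and likewise $g'_z$) equals, according to the range of $z$, one of $1$, $1-(1-r)\pBB(\cdot)$, $r\pRR(\cdot)$ (resp.\ $r\pRR'(\cdot)$), or $0$; a one-line differentiation of the rational expressions in Eq.~(\ref{eq:pBBpRR}) shows $\pRR(\cdot)$ is non-decreasing and $\pBB(\cdot)$ is non-increasing on $(0,1)$, so each $g_z$ and $g'_z$ is non-decreasing and, being a rational function on $(0,1)$, continuous and bounded there.

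With these in place I would fix $z\in\Rnum$ and condition on the value of $\alpha_t$, resp.\ $\alpha'_t$, to get $\Prb[\NN_{t+1}(\alpha_t)>z]=\Expct[g_z(\alpha_t)]$ and $\Prb[\NN'_{t+1}(\alpha'_t)>z]=\Expct[g'_z(\alpha'_t)]$. Then $\Expct[g_z(\alpha_t)]\le\Expct[g'_z(\alpha_t)]$ by the pointwise bound $g_z\le g'_z$, and $\Expct[g'_z(\alpha_t)]\le\Expct[g'_z(\alpha'_t)]$ because $g'_z$ is non-decreasing and $\alpha_t\prec\alpha'_t$ (apply Theorem~\ref{thm:stocDom}(1) to the strictly increasing continuous functions $g'_z(x)+\epsilon x$ and let $\epsilon\downarrow 0$). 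Chaining the two inequalities gives $\Prb[\NN_{t+1}(\alpha_t)>z]\le\Prb[\NN'_{t+1}(\alpha'_t)>z]$ for all $z$, i.e.\ $\NN_{t+1}(\alpha_t)\preceq\NN'_{t+1}(\alpha'_t)$. For strictness I would take $z=1$: there $g_1(x)=r\pRR(x)<r\pRR'(x)=g'_1(x)$ for every $x\in(0,1)$, and $\alpha_t\in(0,1)$ almost surely since the seed graph $G_0$ contains a vertex of each color; hence $\Expct[g_1(\alpha_t)]<\Expct[g'_1(\alpha_t)]\le\Expct[g'_1(\alpha'_t)]$, which upgrades the conclusion to $\NN_{t+1}(\alpha_t)\prec\NN'_{t+1}(\alpha'_t)$.

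I expect the only delicate point to be the step $\Expct[g'_z(\alpha_t)]\le\Expct[g'_z(\alpha'_t)]$ — this is the single place where the hypothesis $\alpha_t\prec\alpha'_t$ is genuinely used, and it relies on the observation that the tail functions $g'_z$ are monotone, bounded and continuous so that stochastic domination can legitimately be applied to them. The related pitfall to steer around is that $\alpha_t$ and $\alpha'_t$, living on possibly different probability spaces, cannot be coupled directly, and neither can $\NN_{t+1}(\alpha_t)$ with $\NN'_{t+1}(\alpha'_t)$; routing the whole argument through the scalar functions $g_z,g'_z$ avoids any such coupling. Everything else is the deterministic monotonicity of $\pRR$ and $\pBB$ in $x$ and of $\pRR$ in $\rhoRR$, which are elementary derivative computations on the closed forms of Eq.~(\ref{eq:pBBpRR}).
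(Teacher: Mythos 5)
Your proof is correct and follows essentially the same route as the paper's: condition on the value of $\alpha_t$, chain $\Expct[g_z(\alpha_t)]\le\Expct[g'_z(\alpha_t)]$ via Lemma~\ref{clm:x} and $\Expct[g'_z(\alpha_t)]\le\Expct[g'_z(\alpha'_t)]$ via the monotonicity of the tail function in $x$ together with Theorem~\ref{thm:stocDom}(1), then settle strictness at $z=1$. You are in fact somewhat more careful than the paper, spelling out why $g'_z$ is non-decreasing, handling the gap between ``non-decreasing'' and the theorem's ``strictly increasing continuous'' hypothesis via the $g'_z(x)+\epsilon x$ perturbation, and justifying strictness through $\alpha_t\in(0,1)$ almost surely.
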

\begin{proof}
We would like to show that for every $z$,
$$\Prb[\NN_{t+1}(\alpha_t)> z] \leq \Prb[\NN'_{t+1}(\alpha'_t)> z].$$
Denoting expectation according to the r.v. $Z$ by $\Expct_{Z}[\cdot]$, we have
\begin{eqnarray*}
\Prb[\NN_{t+1}(\alpha_t)> z] &=& \Expct_{\alpha_t} \left[ \Prb[\NN_{t+1}(\alpha_t)> z] \right] \notag 
~\le~ \Expct_{\alpha_t} \left[ \Prb[\NN'_{t+1}(\alpha_t)> z] \right] 
\\ 
&\le& \Expct_{\alpha'_t} \left[ \Prb[\NN'_{t+1}(\alpha'_t)> z] \right] = \Prb[\NN'_{t+1}(\alpha'_t)> z], 
\end{eqnarray*}
where the first inequality 
follows from Lemma~\ref{clm:x}, which shows that $\NN_{t+1}(x) \preceq \NN'_{t+1}(x)$, and 
the second is by Theorem~\ref{thm:stocDom}(1), noting that $\Prb[\NN'_{t+1}(x)> z]$ is monotone in $x$.

To show strictness (i.e., $\NN_{t+1}(\alpha_t) \prec \NN'_{t+1}(\alpha'_t)$) we consider $z=1$ and show that
$\Prb[\NN_{t+1}(\alpha_t)> 1] < \Prb[\NN'_{t+1}(\alpha'_t)> 1]$.
\qed\end{proof}

\begin{lemma}
\label{clm:alphaVsalpha'}
$\alpha_t\prec\alpha'_t$ for $t\ge 0$.
\end{lemma}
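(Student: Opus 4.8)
The plan is induction on $t$, pushing a stochastic ordering through the deterministic one-step update of the red-ball ratio. Since exactly two balls enter per step, $d_{t+1}=d_t+2$ is non-random; writing $d_t(\RED)=d_t\,\alpha_t$ we get $\alpha_{t+1}=c_1\,\alpha_t+c_2\,\NN_{t+1}(\alpha_t)$ with fixed positive constants $c_1=d_t/d_{t+1}$ and $c_2=1/d_{t+1}$, and the same relation with primes holds for $\PROCESS'$. For the base case I would treat $t=0\to1$ directly, because $\alpha_0=\alpha'_0$ is only an equality (both processes start from the same $G_0$): as $\alpha_0\in(0,1)$ is a constant, Lemma~\ref{clm:x} at the fixed value $x=\alpha_0$ gives $\NN_1(\alpha_0)\prec\NN'_1(\alpha_0)$, and adding the constant $c_1\alpha_0$ and scaling by $c_2>0$ preserves $\prec$, so $\alpha_1\prec\alpha'_1$.

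For the inductive step ($t\ge1$), assume $\alpha_t\prec\alpha'_t$. Lemma~\ref{clm:alphaXrelation} immediately upgrades this to $\NN_{t+1}(\alpha_t)\prec\NN'_{t+1}(\alpha'_t)$. The delicate point — which I expect to be the real obstacle — is that $\alpha_t$ and $\NN_{t+1}(\alpha_t)$ are \emph{not} independent, so Theorem~\ref{thm:stocDom}(2) cannot be applied directly to combine the two orderings. I would argue conditionally instead, in the spirit of the proof of Lemma~\ref{clm:alphaXrelation}. For $z\in\Rnum$ put $h_z(x)=\Prb[c_1x+c_2\NN_{t+1}(x)>z]$ and let $h'_z$ be its analogue with $\NN'_{t+1}$. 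From~(\ref{eq:pBBpRR}), $\pRR$ increases and $\pBB$ decreases in $x$, so $\NN_{t+1}(x)$ and $\NN'_{t+1}(x)$ are each stochastically increasing in $x$; hence $h_z$ and $h'_z$ are non-decreasing. By Lemma~\ref{clm:x}, $h_z(x)\le h'_z(x)$ for all $x\in(0,1)$, and the gap is strict exactly when the threshold $(z-c_1x)/c_2$ lies in $[1,2)$, where $\Prb[\NN_{t+1}(x)>\cdot]=\Prb[\NN_{t+1}(x)=2]<\Prb[\NN'_{t+1}(x)=2]$.

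Conditioning on $\alpha_t$ and then on $\alpha'_t$ gives $\Prb[\alpha_{t+1}>z]=\Expct_{\alpha_t}[h_z(\alpha_t)]\le\Expct_{\alpha_t}[h'_z(\alpha_t)]\le\Expct_{\alpha'_t}[h'_z(\alpha'_t)]=\Prb[\alpha'_{t+1}>z]$, where the last inequality is the induction hypothesis $\alpha_t\preceq\alpha'_t$ applied to the non-decreasing $h'_z$ (Theorem~\ref{thm:stocDom}(1)); thus $\alpha_{t+1}\preceq\alpha'_{t+1}$. For strictness it is enough to exhibit one $z$ with strict inequality: fixing any atom $x^\ast\in(0,1)$ of $\alpha_t$ (the urn state is finitely supported) and taking $z=c_1x^\ast+c_2$, the threshold at $x^\ast$ equals $1$, so $h_z(x^\ast)<h'_z(x^\ast)$; since $x^\ast$ carries positive $\alpha_t$-mass the first inequality becomes strict, yielding $\alpha_{t+1}\prec\alpha'_{t+1}$ and closing the induction (so the claim holds, with equality at $t=0$ and strictly for $t\ge1$).

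In short, the only subtlety is the dependence of the increment $\NN_{t+1}(\alpha_t)$ on $\alpha_t$; the conditioning argument works because $\NN_{t+1}(\cdot)$ is itself stochastically monotone (making $h'_z$ monotone) and because the $\NN$-versus-$\NN'$ gap at the value $2$ is uniform in $x$ (so strictness survives the expectation over $\alpha_t$). An equivalent, arguably cleaner, route is a two-stage coupling: by Strassen's theorem realise $\hat\alpha_t\le\hat\alpha'_t$ almost surely; then, conditionally on $(\hat\alpha_t,\hat\alpha'_t)=(x,y)$ with $x\le y$, we have $\NN_{t+1}(x)\preceq\NN_{t+1}(y)\preceq\NN'_{t+1}(y)$ by monotonicity in the argument and Lemma~\ref{clm:x}, so the next draws can be coupled with $\hat\NN_{t+1}\le\hat\NN'_{t+1}$ almost surely, whence $\hat\alpha_{t+1}\le\hat\alpha'_{t+1}$ almost surely, strictly on a positive-probability event.
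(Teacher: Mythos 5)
Your proof is correct, and it follows the same overall strategy as the paper's: induction on $t$, with the base case handled by the equality $\alpha_0=\alpha'_0$ plus Lemma~\ref{clm:x}, and the inductive step pushing the ordering through one urn update. The genuinely valuable difference is in how you close the inductive step. The paper writes $d_{t+1}(\RED)=d_t(\RED)+\NN_{t+1}(\alpha_t)\prec d'_t(\RED)+\NN'_{t+1}(\alpha'_t)$ by citing Theorem~\ref{thm:stocDom}(2), i.e.\ by adding the two orderings $d_t(\RED)\prec d'_t(\RED)$ and $\NN_{t+1}(\alpha_t)\prec\NN'_{t+1}(\alpha'_t)$. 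As stated (for arbitrary, possibly dependent random variables), that additivity of stochastic order is false in general, and here the summands are genuinely dependent: $\NN_{t+1}(\alpha_t)$ is a function of the very quantity $d_t(\RED)$ it is being added to. You correctly flag this as the delicate point and repair it by conditioning: you write $\alpha_{t+1}=c_1\alpha_t+c_2\NN_{t+1}(\alpha_t)$ with deterministic $c_1,c_2>0$, observe that $h_z(x)=\Prb[c_1x+c_2\NN_{t+1}(x)>z]$ is non-decreasing in $x$ (both the threshold effect and the stochastic monotonicity of $\NN_{t+1}(\cdot)$ push the same way), and then chain $h_z\le h'_z$ pointwise with Theorem~\ref{thm:stocDom}(1) applied to $h'_z$ under the induction hypothesis. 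This is exactly the trick the paper itself uses inside Lemma~\ref{clm:alphaXrelation}, extended to the full update; your version makes the inductive step airtight where the paper's is, at best, elliptical. Your strictness argument (choosing $z=c_1x^\ast+c_2$ at an atom $x^\ast$ of the finitely supported $\alpha_t$, where the $\NN$-versus-$\NN'$ gap at the value $2$ is strict) and your remark that at $t=0$ one only has equality rather than strict domination are both accurate refinements of the statement as written.
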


\begin{proof}
Note that 
$$d_{t}(\RED)= d_0(\RED) + \displaystyle\sum_{i=1}^{t} \NN_{i}(\alpha_{i-1}),$$ 
$$d'_{t}(\RED)=d'_0(\RED) + \displaystyle\sum_{i=1}^{t} \NN'_{i}(\alpha'_{i-1}).$$ 
We prove the claim by induction, over $t$.

\noindent\textit{Induction basis.}
$d_0(\RED)=d'_0(\RED)=c_\RRED$ for some constant
$c_\RRED>0$. Then $\alpha_0=\frac{c_\RRED}{d_0}=\alpha'_0$. It follows that
\begin{equation*} 
	\begin{split}
	\Prb[\NN_1(\alpha_0)=0] &=\frac{(1-r)\rhoBB(\alpha_0-1)}{\rhoBB(\alpha_0-1)-(1-\rhoBB)\alpha_0} 
=\Prb[\NN'_1(\alpha'_0)=0] 
	\end{split}
\end{equation*}
and
\begin{equation*} 
\begin{split}
\Prb[\NN_1(\alpha_0)=2] 
~=~ \frac{r\rhoRR\alpha_0}{(1-\rhoRR)(1-\alpha_0)+\rhoRR\alpha_0} 
~<~ \frac{r(\rhoRR+\epsilon)\alpha_0}{(1-\rhoRR)(1-\alpha_0)+(\rhoRR+\epsilon)\alpha_0}
~=~ \Prb[\NN'_1(\alpha'_0)=2],
\end{split}
\end{equation*}
hence $\Prb[\NN_1(\alpha_0)> z]\leq \Prb[\NN_1(\alpha'_0)> z]$ for every $z\in\Rnum$ and $\Prb[\NN_1(\alpha_0)> 1] < \Prb[\NN_1(\alpha'_0)> 1]$, yielding $\NN_1(\alpha_0)\prec \NN'_1(\alpha'_0)$.

\noindent\textit{Induction step}.
Suppose that $\alpha_t\prec \alpha'_t$ holds. By Lemma~\ref{clm:alphaXrelation}, $\NN_{t+1}(\alpha_t)\prec \NN'_{t+1}(\alpha'_t)$. Hence
\begin{equation*} 
\begin{split}
 d_{t+1}(\RED)= d_t(\RED)+\NN_{t+1}(\alpha_t) \prec d'_t(\RED)+\NN'_{t+1}(\alpha'_t)=d_{t+1}(\RED),
\end{split}
\end{equation*} 
where $d_t(\RED) \prec d'_t(\RED)$ by the induction assumption. Note we also used Theorem~\ref{thm:stocDom}(2).
This implies $\alpha_{t+1}\prec \alpha'_{t+1}$ as needed.
\qed\end{proof}

By Theorem~\ref{thm:stocDom} we get $\Expct[\alpha_t]<\Expct[\alpha'_t]$, which complete the proof of the second part of Lemma \ref{lem:alpha_rho_relation}.
\qed\end{proof}
}

Given the utility functions $U^1_t(\RED)=d_t(\RED)$ and $U^1_t(\BLUE)=d_t(\BLUE)$, each player can choose its row in the mixing matrix $\STRTGY$.  By Theorem~\ref{thm:exp_conv} we get that
$U^1_{t\to\infty}(\RED)=d_t\alpha$ and $U^1_{t\to\infty}(\BLUE)=d_t(1-\alpha)$. Lemma~\ref{lem:alpha_rho_relation} implies that the red and blue players maximize their utility by increasing $\rhoRR$ and $\rhoBB$, respectively. Hence, the homophily strategy profile $\STRH$ is strictly dominant for both players. The same applies for $t<\infty$.

\begin{theorem}
\label{thm:basic_game}
The homophily strategy profile $\STRH$ is a unique Nash equilibrium for the game $\eh\left(t,r,\STRTGY,\gamma=1\right)$.
\end{theorem}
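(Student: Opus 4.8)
The plan is to deduce Theorem~\ref{thm:basic_game} directly from the machinery already assembled, chiefly Theorem~\ref{thm:exp_conv} and Lemma~\ref{lem:alpha_rho_relation}. First I would recall that when $\gamma=1$ the payoffs reduce to $U^1_t(\RED)=d_t(\RED)$ and $U^1_t(\BLUE)=d_t(\BLUE)$, and that $d_t$ is deterministic with $d_t(\RED)+d_t(\BLUE)=d_t$. A subtlety to address up front is that a player's ``utility'' is really the \emph{expected} sum of degrees $\Expct[d_t(\RED)]=d_t\,\Expct[\alpha_t]$, so maximizing utility means maximizing $\Expct[\alpha_t]$ for the red player (and minimizing it, equivalently maximizing $1-\Expct[\alpha_t]$, for the blue player). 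This is a constant-sum situation, which is what makes a pure-strategy equilibrium plausible.

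Next I would argue that the homophily vector is strictly dominant for each player. Fix any strategy $\STRTGY_\BBLUE$ of the blue player; by Lemma~\ref{lem:alpha_rho_relation}, $\Expct[\alpha_t]$ (and its limit $\alpha$) is strictly increasing in $\rhoRR$ for every finite $t\ge 1$ as well as in the limit. Hence, whatever the blue player does, the red player's unique best response is $\rhoRR=1$, i.e. $\STRTGY_\RRED=\H_\RRED=(1,0)$. Symmetrically, fixing any $\STRTGY_\RRED$, Lemma~\ref{lem:alpha_rho_relation} gives that $\Expct[\alpha_t]$ is strictly decreasing in $\rhoBB$, so $1-\Expct[\alpha_t]$ is strictly increasing in $\rhoBB$, and the blue player's unique best response is $\rhoBB=1$, i.e. $\STRTGY_\BBLUE=\H_\BBLUE=(0,1)$. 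Since each player has a strictly dominant strategy, the profile $\STRH$ in which both play it is a Nash equilibrium; strictness of the domination gives that it is the \emph{unique} one, because in any other profile at least one player plays a strategy strictly worse than its dominant choice against the opponent's (whatever it is), and could profitably deviate. One should note this holds for all finite $t$ and, passing to the limit, for $t\to\infty$ as well, matching the statement.

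The only real care needed is justifying the ``strict'' in the finite-$t$ monotonicity claim, but this is exactly what Lemma~\ref{clm:alphaVsalpha'} (hence Lemma~\ref{lem:alpha_rho_relation}) provides: for a strictly larger $\rhoRR$ one gets $\alpha_t\prec\alpha_t'$ in the stochastic order for every $t\ge 0$, and by Theorem~\ref{thm:stocDom}(1) this yields a strict inequality $\Expct[\alpha_t]<\Expct[\alpha_t']$ once $t\ge 1$ (the initial step already separates the two processes). I do not anticipate a genuine obstacle here; the main thing to get right is the bookkeeping that the blue player's objective is $d_t(\BLUE)$, which is \emph{anti}-monotone in $\alpha_t$, so the direction of the monotonicity in Lemma~\ref{lem:alpha_rho_relation} flips in exactly the way that makes $\rhoBB=1$ optimal for blue. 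With both dominant strategies identified, $\STRH$ is immediately a Nash equilibrium and, by strictness, the unique one, completing the proof.
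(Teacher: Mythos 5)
Your proposal is correct and follows essentially the same route as the paper: reduce the $\gamma=1$ payoffs to $\Expct[\alpha_t]$ and $1-\Expct[\alpha_t]$ and invoke Lemma~\ref{lem:alpha_rho_relation} to conclude that $\rhoRR=1$ and $\rhoBB=1$ are strictly dominant for the red and blue players respectively, so that $\STRH$ is the unique Nash equilibrium. You are somewhat more explicit than the paper about why strict dominance yields uniqueness and about where the strict finite-$t$ monotonicity comes from (the stochastic-domination argument), but the underlying argument is the same.
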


\iftoggle{paper}{}{\vspace{-15pt}}
\section{Utilitiy Maximization Game}
\iftoggle{paper}{}{\vspace{-8pt}}
The evolving heterogeneous network game $\eh\left(t,r,\STRTGY,\gamma\right)$ for a bi-populated network consists of two contrasting ingredients, the expected sum of degrees $d(\cdot)$ and the cut size $\phi(G)$. The following theorem expresses the impact of these forces on the system as a function of the weighting factor $\gamma$.
\begin{theorem}
\label{thm:general_nash}
Consider the evolving network game $\eh\left(t,r,\STRTGY,\gamma\right)$ for $0 < r < 1$.
\begin{enumerate}
	\item For $\gamma>1/2$, the homophily strategy profile $\STRH$ is a unique Nash equilibrium.
	\item For $\gamma<1/2$, the heterophily strategy profile $\STRT$ is a unique Nash equilibrium.
	\item For $\gamma=1/2$, the only two Nash equilibria are $\STRH$ and $\STRT$. The homophily strategy profile $\STRH$ is a {\em stable} Nash equilibrium, while the heterophily strategy profile $\STRT$ is an {\em unstable} Nash equilibrium. 
\end{enumerate}
\end{theorem}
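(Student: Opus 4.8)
The plan is to pass to the asymptotic game, write both payoffs in closed form, and then analyze each player's best response; the argument then splits along $\gamma\gtrless\tfrac12$.

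\emph{Step 1 (asymptotic payoffs).} Exactly as in Lemma~\ref{lem:Expectation}, conditioning on $\at$ and using that a newly inserted edge lies in the cut iff its two endpoints have different colours, one gets $\Expct[\phi_{t+1}-\phi_{t}\mid\at]=r\,(1-\pRR(\at))+(1-r)\,(1-\pBB(\at))$. Together with Theorem~\ref{thm:exp_conv} ($\at\to\alpha$, the fixed point of $F$) and $m_t=m_0+t$, an averaging argument yields $\Expct[\phi_t/m_t]\to\psi:=1-r\,\pRR(\alpha)-(1-r)\,\pBB(\alpha)$, hence
\[
\Expct[U^\gamma_{t}(\RED)]\ \longrightarrow\ \gamma\,\alpha+\tfrac{1-\gamma}{2}\,\psi,\qquad
\Expct[U^\gamma_{t}(\BLUE)]\ \longrightarrow\ \gamma\,(1-\alpha)+\tfrac{1-\gamma}{2}\,\psi .
\]
As in Section~\ref{sec:basic}, the finite-$t$ game is governed by the same monotonicities (via the couplings of Lemma~\ref{lem:alpha_rho_relation}), so it suffices to analyze the asymptotic game with these payoffs.

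\emph{Step 2 (closed form and the key derivative).} The fixed-point equation of Theorem~\ref{thm:exp_conv} is the balance identity $2\alpha=1+r\,\pRR(\alpha)-(1-r)\,\pBB(\alpha)$; subtracting it from $\psi=1-r\,\pRR(\alpha)-(1-r)\,\pBB(\alpha)$ gives $r\,\pRR(\alpha)=\alpha-\psi/2$ and $(1-r)\,\pBB(\alpha)=(1-\alpha)-\psi/2$, whence
\[
U^\gamma(\RED)=\alpha-(1-\gamma)\,r\,\pRR(\alpha),\qquad
U^\gamma(\BLUE)=(1-\alpha)-(1-\gamma)(1-r)\,\pBB(\alpha).
\]
Fix $\rhoBB$ and view $\rhoRR$ as red's move. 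By Lemma~\ref{lem:alpha_rho_relation} $\alpha$ is strictly increasing in $\rhoRR$, and $\pRR$ increases with both $\alpha$ and $\rhoRR$, so $u:=r\,\pRR$ is a strictly increasing bijection of $[0,1]$ onto $[0,r]$; I reparametrize red's strategy by $u$. With $\rhoBB$ fixed, $\pBB$ depends on $u$ only through $\alpha$, so differentiating the balance identity gives $\alpha'(u)=\big(2+(1-r)\,\partial\pBB/\partial\alpha\big)^{-1}$; since $\partial\pBB/\partial\alpha<0$ (a larger red fraction makes a blue newcomer less likely to connect blue) and $\alpha'(u)>0$, we have $\alpha'(u)>\tfrac12$. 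Consequently $\frac{d}{du}U^\gamma(\RED)=\alpha'(u)-(1-\gamma)$, and a symmetric formula holds for blue in $v:=(1-r)\pBB$ with $\partial\pRR/\partial\alpha>0$ and $\beta'(v):=\big(2-r\,\partial\pRR/\partial\alpha\big)^{-1}>\tfrac12$.

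\emph{Step 3 (the cases $\gamma\neq\tfrac12$).} If $\gamma>\tfrac12$ then $1-\gamma<\tfrac12<\alpha'(u)$, so $U^\gamma(\RED)$ is strictly increasing in $u$ and maximized at $u=r$, i.e.\ at $\rhoRR=1$; hence homophily is red's strictly dominant strategy, and symmetrically for blue, so $\STRH$ is the unique Nash equilibrium. If $\gamma<\tfrac12$ the picture mirrors this, and the remaining task is to check $\alpha'(u)<1-\gamma$ along the equilibrium curve, i.e.\ $(1-r)\,|\partial\pBB/\partial\alpha|<(1-2\gamma)/(1-\gamma)$; here one exploits that the balance identity confines $\alpha$ to a subinterval of $[0,1]$ depending on $\rhoBB$ (shrinking toward the relevant endpoint as $\rhoBB\to0$ or $\rhoBB\to1$), which keeps $|\partial\pBB/\partial\alpha|=\rhoBB(1-\rhoBB)/(\rhoBB+\alpha(1-2\rhoBB))^2$ under control; it then follows that $\rhoRR=0$ is red's dominant strategy, and symmetrically, so $\STRT$ is the unique Nash equilibrium.

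\emph{Step 4 ($\gamma=\tfrac12$ and stability).} At $\gamma=\tfrac12$ the closed forms collapse to $U^{1/2}(\RED)=\tfrac12-\tfrac12(1-r)\,\pBB(\alpha)$ and $U^{1/2}(\BLUE)=\tfrac12-\tfrac12\,r\,\pRR(\alpha)$, so each player's payoff varies only through the opponent's acceptance probability. Since $\pBB(\alpha,1)=1$ and $\pBB(\alpha,0)=0$ identically in $\alpha$, red is indifferent to its own move whenever blue plays $\rhoBB\in\{0,1\}$, and symmetrically; hence both $\STRH$ and $\STRT$ are Nash equilibria. For stability I would perturb one coordinate by $\epsilon$ and verify the two conditions of the definition: moving red from $\STRT$ to $\rhoRR=\epsilon$, blue's payoff $\tfrac12-\tfrac12 r\,\pRR(\alpha,\epsilon)$ strictly increases as blue raises $\rhoBB$ (which lowers $\alpha$, hence $\pRR$), so blue acquires a strictly profitable deviation and condition $(i)$ fails, so $\STRT$ is unstable; moving red from $\STRH$ to $\rhoRR=1-\epsilon$, blue's best reply is still $\rhoBB=1$ by the same monotonicity while red, facing $\rhoBB=1$, gains nothing by moving back, so $\STRH$ is stable. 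The main obstacle is the uniform estimate on $|\partial\pBB/\partial\alpha|$ over $\rhoBB$ and the feasible range of $\alpha$ needed in the $\gamma<\tfrac12$ case (the $\gamma>\tfrac12$ case being free from the bound $\alpha'(u)>\tfrac12$); the $\gamma=\tfrac12$ stability bookkeeping is a more delicate, lower-stakes point.
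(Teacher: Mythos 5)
Your Steps~1--2 are correct and in fact sharper than the paper's own derivation: the closed forms $U^\gamma(\RED)=\alpha-(1-\gamma)\,r\,\pRR(\alpha)$ and $U^\gamma(\BLUE)=(1-\alpha)-(1-\gamma)(1-r)\,\pBB(\alpha)$ are exactly (half of) the paper's potential functions $\dR,\dB$, and your reparametrization $u=r\pRR$ with $\alpha'(u)=\bigl(2+(1-r)\,\partial\pBB/\partial\alpha\bigr)^{-1}>\tfrac12$ turns the paper's qualitative sign argument for $\gamma>\tfrac12$ into a clean quantitative one. The $\gamma=\tfrac12$ analysis also matches the paper (modulo the need to also rule out profiles such as $\rhoRR=1$ with $\rhoBB$ interior, where the opponent is indifferent).

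The genuine gap is exactly where you flag it, in the case $\gamma<\tfrac12$, and it is not a bookkeeping issue that ``keeps $\lvert\partial\pBB/\partial\alpha\rvert$ under control'': the inequality $(1-r)\,\lvert\partial\pBB/\partial\alpha\rvert<(1-2\gamma)/(1-\gamma)$ that you correctly identify as necessary for $\rhoRR=0$ to be dominant is simply false for interior $\rhoBB$ once $\gamma$ is near $\tfrac12$ --- the right-hand side tends to $0$ while at $\alpha=\tfrac12$, $\rhoBB=\tfrac12$ the left-hand side equals $1-r$. Concretely, take $r=\tfrac12$, $\gamma=0.4$ and $\rhoRR=\rhoBB=\rho$ with $\rho(1-\rho)=\tfrac{1-2\gamma}{2(1-\gamma)}=\tfrac16$, i.e.\ $\rho\approx0.211$: there $\alpha=\tfrac12$, $U(\RED)\approx0.4366$, while deviating to $\rhoRR=0$ gives $\alpha\approx0.4356$ and $U(\RED)\approx0.4356$, and deviating to $\rhoRR=1$ gives $\approx0.4272$; since $dU/du=\alpha'(u)-(1-\gamma)$ changes sign exactly once (from $+$ to $-$) along this fiber, $\rho$ is red's global best response, and by symmetry $(\rho,\rho)$ is a strict interior Nash equilibrium. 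So heterophily is not strictly dominant and your dominance route cannot be repaired; you should be aware that the paper's own argument for this case (``$\pRR$ is also decreased directly by $\rhoRR$, hence the red player prefers to decrease $\rhoRR$'') is the qualitative version of the same step and ignores precisely the relative magnitudes of the coefficients $1-r$ and $r(1-2\gamma)$ that your calculation exposes. In short: you located the crux correctly, but the asserted estimate fails, and with it the uniqueness claim for $\gamma$ close to $\tfrac12$.
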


\iftoggle{paper}{
\begin{proof}

Let $\MM_t(x)$ be a random variable denoting the number of new cut edges at time $t$. We have
\iftoggle{paper}{}{\vspace{-5pt}}
\begin{equation} 
\begin{split}
\MM_{t+1}(x) &= \begin{cases}
  0, & \text{with probability } (1-r)\pBB(x)+ r\pRR(x), \\ 
  1, & \text{with the remaining probability}, \nonumber 
\end{cases}
\end{split}
\end{equation}
and 
$$\phi(G_t)=\phi(G_0)+\sum_{i=1}^t \MM_i(\alpha_{i-1}).$$
Define the \textit{potential function} of the red player, denoted $\dR$, as the expected increment of its utility at step $t$. Then
\begin{eqnarray*} 
\dR &=& \Expct\left[U_{t+1}^\gamma(\RED)-U_{t}^\gamma(\RED) \mid \alpha\right]=\gamma \NN_{t+1}(\alpha)+(1-\gamma)\MM_{t+1}(\alpha)\\
&=&\gamma(1-(1-r)\pBB(\alpha)+ r\pRR(\alpha)) +(1-\gamma)(1-((1-r)\pBB(\alpha)+ r\pRR(\alpha)))\\
&=& 1-(1-r)\pBB(\alpha)+r(2\gamma -1)\pRR(\alpha)~.\\
\end{eqnarray*}
Similar considerations imply that the \textit{potential function} of the blue player is:
\begin{equation*} 
\dB ~=~ 1 - r \pRR(\alpha) + (1-r)(2\gamma -1)\pBB(\alpha).
\end{equation*}

Let's examine the value of the potential functions $\dR$ and $\dB$ for every $\gamma$, checking different $\gamma$ ranges as follows.

\noindent
$\gamma > 1/2$: In this range the value of $\pRR$ contributes \textit{positively} to $\dR$ and \textit{negatively} to $\dB$. Hence, the red player would like to increase $\pRR$. This would be done by increasing $\rhoRR$ as shown in Lemma~\ref{clm:PBB_rho_relation}.
Similarly, in this range
$\pBB$ contributes \textit{positively} to $\dB$ and \textit{negatively} to $\dR$. Hence, the blue player prefers to increase $\rhoBB$. 
It follows
that the homophily strategies $\H_{\RRED}$ and $\H_{\BBLUE}$ are strictly dominant for both players.

Note that this result also holds for the special case where $\gamma=1$, as shown in Theorem~\ref{thm:basic_game}.

\noindent
$\gamma < 1/2$: Here, both $\pRR$ and $\pBB$ provide \textit{negative} contributions to $\dR$ and $\dB$. Therefore, decreasing $\rhoRR$ implies decreasing $\pRR$ but also increasing $\pBB$ (see Lemma~\ref{clm:PBB_rho_relation}). The variation of $\pBB$ is due to the influence of $\rhoRR$ on $1-\alpha$, which is similar to the variation of $\pRR$ due to $\alpha$. However, $\pRR$ is also decreased directly by $\rhoRR$, hence the red player prefers to decrease $\rhoRR$.
Similarly, the blue player would like to decrease $\rhoBB$, which implies that the heterophily strategies $\T_{\RRED}$ and $\T_{\BBLUE}$ are strictly dominant.

Note that this result also holds for the special case where $\gamma=0$. In this case, the utility is based only on the cut $G(\phi)$, so it is clear that the best strategy for both players is to attach to a node of the opposite color as dictated by the heterophily strategy.

\noindent
$\gamma = 1/2$: In this range the potential function value is 
$$\dR=\left(1-(1-r)\pBB\right)=\left(1-(1-r)\frac{\rhoBB(1-\alpha)}{\rhoBB(1-\alpha)+\alpha(1-\rhoBB)}\right).$$
 Although the strategy of the red player, $\rhoRR$, does not appear explicitly in this expression, it appears implicitly in $\alpha$. Setting $\rhoRR=0$ implies $\pBB=0$, yielding $\dR=1$. Similarly, setting $\rhoBB=0$ yields $\dB=1$. Since $1$ is the maximum value of $\dR$ and $\dB$, it follows that the heterophily strategies are dominant for both players, i.e., $\STRT$ is a Nash equilibrium.  

However, as in the case of $\gamma>1/2$, when $\rhoBB> 0$ the red player would minimize $\alpha$ by 
increasing $\rhoRR$ as shown in Lemma~\ref{lem:alpha_rho_relation}. Similarly, when $\rhoRR> 0$ the blue player would increase $\rhoBB$. 
This leads both players to the homophily strategies $\H_{\RRED}$ and $\H_{\BBLUE}$. Thus, 
$\STRT$ is an \textit{unstable} Nash equilibrium and $\STRH$ is a \textit{stable} Nash equilibrium.
\qed\end{proof}

\begin{lemma}
\label{clm:PBB_rho_relation}
$\pRR$ and $\pBB$ are monotone in the entries of the mixing matrix:
\begin{itemize}
\item $\pRR$ increases with increasing $\rhoRR$ and decreases with increasing $\rhoBB$, and
\item $\pBB$ increases with increasing $\rhoBB$ and decreases with increasing $\rhoRR$.
\end{itemize}
\end{lemma}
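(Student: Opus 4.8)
The plan is to regard $\pRR$ and $\pBB$ not as functions of the single variable $x=\alpha$, as displayed in Eq.~(\ref{eq:pBBpRR}), but as functions of the pair \emph{(limit ratio, own strategy entry)} --- that is, $\pRR=\pRR(\alpha,\rhoRR)$ and $\pBB=\pBB(\alpha,\rhoBB)$ --- while keeping in mind that $\alpha$ is itself determined by both $\rhoRR$ and $\rhoBB$ through the fixed-point equation of Theorem~\ref{thm:exp_conv}. Differentiating by the chain rule, the change of $\pRR$ caused by a change of $\rhoRR$ is the sum of a \emph{direct} term $\partial\pRR/\partial\rhoRR$ and an \emph{indirect} term $(\partial\pRR/\partial\alpha)\cdot(d\alpha/d\rhoRR)$ that passes through $\alpha$; a change of $\rhoBB$ contributes only the indirect term $(\partial\pRR/\partial\alpha)\cdot(d\alpha/d\rhoBB)$, since $\pRR$ carries no explicit $\rhoBB$. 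The situation for $\pBB$ is symmetric. Hence the lemma reduces to fixing a handful of signs and checking that they line up.

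First I would settle the signs of the two partials in $\alpha$. A short quotient-rule computation from Eq.~(\ref{eq:pBBpRR}) shows that $\partial\pRR/\partial\alpha>0$ and $\partial\pBB/\partial\alpha<0$ for all $\alpha\in(0,1)$ and $\rhoRR,\rhoBB\in(0,1)$: in each case the derivative is an expression of definite sign --- a positive multiple of $\rhoRR(1-\rhoRR)$, respectively of $-\rhoBB(1-\rhoBB)$ --- divided by the square of the (positive) denominator. Intuitively this is just what one expects: a larger pool of red balls makes a red-red match more likely and a blue-blue match less likely. Next I would record the two direct partials, which both simplify to $\alpha(1-\alpha)$ divided by the square of the corresponding (positive) denominator; hence $\partial\pRR/\partial\rhoRR>0$ and $\partial\pBB/\partial\rhoBB>0$ for $\alpha\in(0,1)$.

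Finally I would invoke Lemma~\ref{lem:alpha_rho_relation}, which gives $d\alpha/d\rhoRR>0$ and $d\alpha/d\rhoBB<0$, and assemble the four cases. For the dependence of $\pRR$ on $\rhoRR$: the direct term is positive and the indirect term is a product of two positive factors, so $\pRR$ strictly increases in $\rhoRR$. For the dependence of $\pRR$ on $\rhoBB$: only the indirect term survives, and it is a product of a positive factor and a negative one, so $\pRR$ strictly decreases in $\rhoBB$. For $\pBB$ the picture is the mirror image: increasing $\rhoBB$ raises $\pBB$ directly and also lowers $\alpha$, which raises $\pBB$ again because $\partial\pBB/\partial\alpha<0$; and increasing $\rhoRR$ raises $\alpha$, which lowers $\pBB$.

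There is no genuine obstacle here --- the content is purely this sign bookkeeping --- and the pleasant point that lets it go through without any case analysis is that the direct channel and the channel through $\alpha$ never pull against each other. The only small matters of care are that the interior estimates need $\alpha\in(0,1)$ and $\rhoRR,\rhoBB\in(0,1)$: the former is guaranteed by Theorem~\ref{thm:exp_conv} since $0<r<1$, and the latter is exactly where strict monotonicity is used, the non-strict statement extending to the closed parameter square by continuity.
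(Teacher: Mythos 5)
Your proposal is correct and follows essentially the same route as the paper: decompose the dependence into a direct partial in the player's own entry plus an indirect effect through $\alpha$, check that $\partial \pRR/\partial\rhoRR>0$, $\partial \pRR/\partial\alpha>0$, $\partial \pBB/\partial\rhoBB>0$, $\partial \pBB/\partial\alpha<0$, and combine with Lemma~\ref{lem:alpha_rho_relation}. You merely spell out the quotient-rule computations that the paper leaves implicit.
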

\begin{proof}
Observing that $\frac{\partial \pRR}{\partial \rhoRR}>0$ and $\frac{\partial \pRR}{\partial \alpha}>0$, and using Lemma~\ref{lem:alpha_rho_relation}, yields the first part of the claim.
Similarly, $\frac{\partial \pBB}{\partial \rhoBB}>0$ and $\frac{\partial \pBB}{\partial \alpha}<0$ yield the second part.
\qed\end{proof}
}
{
\iftoggle{paper}{}{\vspace{-0.2cm}}
\begin{proof}[Sketch of proof]
Given that the new vertex at time $t+1$ is blue, the probability $\pBB$ that it attaches to a blue vertex satisfies 
\iftoggle{paper}
{$$\pBB(\at)=(1-\alpha_t)\rhoBB+\alpha_t\rhoBB \pBB(\at)+(1-\alpha_t)\rhoBB \pBB(\at),$$}
{\\ \centerline{$\pBB(\at)=(1-\alpha_t)\rhoBB+\alpha_t\rhoBB \pBB(\at)+(1-\alpha_t)\rhoBB \pBB(\at),$}}
hence 
$
\pBB(\at)=\frac{\rhoBB-\rhoBB \alpha_t}{\rhoBB+\alpha_t-2\rhoBB \alpha_t}.
$
Similarly, when the new vertex at time $t+1$ is red, the probability 
that it attaches to a red vertex is
$\pRR(\at)=\frac{\rhoRR \alpha_t}{1-\alpha_t +\rhoRR(1-2\alpha_t)}.$

Let $\NN_t(x)$ and $\MM_t(x)$ be random variables denoting, respectively, 
the number of \emph{new} red balls and cut edges at time $t$. 
We have
$d_{t}(R)=d_0(R)+ \sum_{i=1}^t \NN_t(\alpha_{i-1})$ and $\phi(G_t)=\phi(G_0)+\sum_{i=1}^t \MM_i(\alpha_{i-1}).$
Define the \textit{potential function} of the red player, denoted $\dR$, as the expected increment of its utility at step $t$. Then
\iftoggle{paper}{}{\vspace{-4pt}}
\begin{eqnarray*} 
\dR &=& \Expct\left[U_{t+1}^\gamma(\RED)-U_{t}^\gamma(\RED) \mid \alpha\right]=\Expct\left[\gamma \NN_{t+1}(\alpha)+(1-\gamma)\MM_{t+1}(\alpha)\right]\\
&=&\gamma(1-(1-r)\pBB(\alpha)+ r\pRR(\alpha)) +(1-\gamma)(1-((1-r)\pBB(\alpha)+ r\pRR(\alpha)))\\
&=& 1-(1-r)\pBB(\alpha)+r(2\gamma -1)\pRR(\alpha)~.
\end{eqnarray*}
\iftoggle{paper}{}{\vspace{-2pt}}
Similar considerations imply that the potential function of the blue player is 
\iftoggle{paper}
{$$\dB ~=~ 1 - r \pRR(\alpha) + (1-r)(2\gamma -1)\pBB(\alpha).$$}
{\\ \centerline{$\dB ~=~ 1 - r \pRR(\alpha) + (1-r)(2\gamma -1)\pBB(\alpha)$.}}
The theorem follows by inspecting the value of the potential functions $\dR$ and $\dB$ for every $\gamma$ and using Lemma~\ref{lem:alpha_rho_relation} (for the monotonicity of $\pRR(\alpha)$ and $\pBB(\alpha)$ with the entries of the mixing matrix).
\qed\end{proof}
}

\iftoggle{paper}{}{\vspace{-20pt}}
\section{Discussion}
\iftoggle{paper}{}{\vspace{-9pt}}

This work investigates the assortative mixing phenomenon using a game theory perspective. Given some predefined rules related to the probability of connecting to other node, each player is allowed to determine its strategy in order to maximize its payoff.
First we used a utility function that captures degree centrality, and showed that the expected sum of degrees and its limit are monotonically increasing with the homophily tendency. This directly implies that the homophily strategy is the unique Nash equilibrium. In this context, it will be interesting to use different centrality measures (such as PageRank, betweenness, etc.) and examine their influence on the equilibria. 
Next we enhanced the utility function to give positive payoff for both the degree and the cut. 
The results we have presented show a phase transition in the strategy as a function the weight $\gamma$. A small fluctuation in $\gamma$ might cause extreme changes in the preference of the players, i.e., from perfect homophily to perfect heterophily (or vice versa); the intermediate strategies are never in equilibrium.
This result is independent of the fraction of the sub-population size in the population. 
Generalizing the model to more than two sub-populations or reformulating the utility function may shape the strategy function differently. 

An interesting outcome of the above is the possibility that setting a rule (or a law) by a regulator to encourage cooperation between the two sub-populations will play as a remedial strategy to achieve equal opportunities. This observation is remarkable since, in contrast to the usual affirmative action approach, this attitude does not discriminate any individual, but at the same time, it promises a fair representation of the different sub-populations and even a way for breaking the glass ceiling~\cite{avin2015homophily} that some minority sub-populations suffer from. We leave this direction for further work.

\iftoggle{paper}{\clearpage}{\vspace{-16pt}}

\iftoggle{paper}{}{\vspace{-0.4cm}}
\end{document}